\documentclass[11pt]{article}

\usepackage{amssymb,amsmath,amsfonts,eurosym,geometry,ulem,graphicx,color,setspace,sectsty,comment,footmisc,caption,pdflscape,array,placeins,adjustbox,booktabs,longtable,dirtytalk,tabularx,multirow,soul,threeparttable,float,rotating, authblk, bbm, endnotes,tikz,listings,xcolor}
\usepackage{csquotes}
\usepackage{enumitem}
\usetikzlibrary{arrows.meta,positioning,calc}

\usepackage{tcolorbox}
\usepackage[hyperfootnotes=false, colorlinks=true, linkcolor=black, urlcolor=black, citecolor={blue}]{hyperref}

\newtcolorbox{llmthought}{
  colback=black!5,
  colframe=black!20,
  boxrule=0.5pt,
  arc=2pt,
  left=10pt,
  right=10pt,
  top=8pt,
  bottom=8pt,
  fontupper=\ttfamily\footnotesize,
  title={Quotes from the LLM chain-of-thought},
  fonttitle=\sffamily\small,
  coltitle=black
}

\lstdefinestyle{promptstyle}{
  basicstyle   = \ttfamily\scriptsize,
  mathescape   = true,
  frame        = single,
  breaklines   = true,
  breakatwhitespace = true,
  postbreak    = \mbox{$\hookrightarrow$\space},
  escapeinside = {(*@}{@*)}          
}

\usepackage{natbib}
 \bibpunct[, ]{(}{)}{,}{a}{}{,}%
 \def\bibsep{\smallskipamount}%

\newtheorem{proposition}{Proposition}
\newenvironment{proof}[1][Proof]{\noindent\textbf{#1.} }{\ \rule{0.5em}{0.5em}}

\newcolumntype{L}[1]{>{\raggedright\let\newline\\arraybackslash\hspace{0pt}}m{#1}}
\newcolumntype{C}[1]{>{\centering\let\newline\\arraybackslash\hspace{0pt}}m{#1}}
\newcolumntype{R}[1]{>{\raggedleft\let\newline\\arraybackslash\hspace{0pt}}m{#1}}

\begin{document}

\title{On the Fragility of AI Agent Collusion}



\author{Jussi Keppo, Yuze Li, Gerry Tsoukalas, Nuo Yuan\thanks{Keppo (\href{mailto:keppo@nus.edu.sg}{keppo@nus.edu.sg}): National University of Singapore; Li (\href{mailto:yuzeli@bu.edu}{yuzeli@bu.edu}), Tsoukalas (\href{mailto:gerryt@bu.edu}{gerryt@bu.edu}), Yuan (\href{mailto:nyuan01@bu.edu}{nyuan01@bu.edu}): Boston University.}}


\date{January 30, 2026\footnote{First version: September 25, 2025}}

\maketitle

\begin{abstract}

\singlespacing



Recent work shows that pricing with symmetric LLM agents leads to algorithmic collusion. We show that  collusion is fragile under the heterogeneity typical of real deployments. In a stylized repeated-pricing model, heterogeneity in patience or data access reduces the set of collusive equilibria. Experiments with open-source LLM agents (totaling over 2,000 compute hours) align with these predictions: patience heterogeneity reduces price lift from 22\% to 10\% above competitive levels; asymmetric data access, to 7\%. Increasing the number of competing LLMs breaks up collusion; so does cross-algorithm heterogeneity, that is, setting LLMs against Q-learning agents. But model-size differences (e.g., 32B vs. 14B weights) do not; they generate leader-follower dynamics that stabilize collusion. We discuss antitrust implications, such as enforcement actions restricting data-sharing and policies promoting algorithmic diversity.


\bigskip


\noindent\textbf{Keywords:} Algorithmic Pricing, Algorithmic Collusion, Antitrust, Artificial Intelligence (AI), AI Governance, Large Language Models.

\end{abstract}

\setstretch{1.45}

\section{Introduction}\label{sec: introduction}

Firms are increasingly delegating pricing decisions to algorithms, making algorithmic collusion a central concern for antitrust authorities \citep{gans2023pricecollusion}. Large language models (LLMs) push this trend further by enabling agents that interpret context and revise strategy autonomously. For example, Anthropic's \textit{Project Vend} gave Claude price-setting authority in a small but real-world retail setting \citep{anthropic2025projectvend}. On a larger scale, Delta Airlines announced that it is using generative AI to determine a portion of its domestic ticket prices \citep{mccarthy2025delta}. Against this backdrop, a senior U.S. Department of Justice official recently warned this trend could lead to  ``fully automated cartels operating without any human involvement'' \citep{steptoe2025speech}.

A growing literature shows that autonomous pricing algorithms can sustain tacit collusion without explicit communication \citep{calvano2020, klein2021, banchio2022artificial}. More recently, \citet{fish2024} show that LLM-based agents also collude, may converge faster than reinforcement learning (RL) approaches, and do so robustly across prompt variations. Yet existing studies typically assume identical or near-identical agents competing in highly stylized markets. Real deployments are messier:  firms adopt models that have different capabilities, different algorithms, proprietary prompts, varying update frequencies, and asymmetric information sets \citep{Huang2023ChatGPTGig}.  This raises a central question for competition policy: \emph{does LLM-mediated algorithmic collusion survive the heterogeneity typical of real-world deployments, and if variation disrupts collusion, which dimensions matter most?} This paper provides systematic evidence.

\textbf{Research Methodology:} Answering this question with field data is currently challenging; LLM pricing deployments are too recent to generate public observations at scale. We therefore adopt a two-pronged approach that leverages economic theory on one hand and laboratory experiments in the spirit of \citet{fish2024} on the other. The two play complementary roles. Experiments allow us to manipulate dimensions of heterogeneity independently---something observational data, even if available, would rarely permit. But the space of possible variation is large (models differ, prompts differ, information sets differ, etc.). Experimentation without theoretical grounding risks being ad hoc. Theory helps guide our empirical design by identifying which dimensions have equilibrium consequences.

\textbf{Model:} The folk theorem establishes that tacit collusion is sustainable if and only if players are sufficiently patient, making the discount factor a central primitive \citep{fudenberg1986folk}. \citet{APS1986, APS1990} extend this logic to environments with imperfect monitoring, showing that the ability to detect and punish deviations depends critically on the information available to players. These two primitives---patience and information---anchor our initial analysis. We develop a simple infinite-horizon pricing model that allows both to vary across agents. The model yields two comparative statics predictions. First, heterogeneity in patience undermines collusion: when discount factors differ, the impatient agent faces stronger deviation incentives, making coordination on supra-competitive prices harder to sustain. Second, asymmetric access to market data raises the patience threshold required for collusion, because the informationally disadvantaged agent cannot reliably detect deviations and therefore cannot credibly threaten punishment. The model is intentionally stylized; its contribution is not standalone theory but a principled framework linking heterogeneity dimensions to testable predictions. Our approach follows the methodology of behavioral game theory \citep{camerer2003behavioral}---using equilibrium predictions as benchmarks for evaluating actual behavior without claiming agents compute equilibria explicitly.

\textbf{Experimental Design:} Our experimental design deliberately relaxes the model's stylized assumptions to stress test whether its theoretical predictions survive harsher, more realistic deployment frictions. Agents in real deployments lack direct knowledge of demand primitives, cost structures, and equilibrium concepts. On the other hand, LLMs are trained on corpora that include textbook treatments of competition and oligopoly, knowledge they may draw on without explicit instruction. We implement a repeated Bertrand game using agents based on DeepSeek-R1, an open-source reasoning model whose weights and architecture are publicly available. Open-source selection is not incidental: it ensures full replicability and addresses concerns that results might depend on proprietary, unverifiable model internals. Each period, agents observe their own price, realized demand, and profit, but not the demand function or structural primitives. Patience is implemented via the objective specified in each agent's prompt: agents maximize profits over varying time horizons. Information asymmetry is implemented by varying access to competitors' historical prices. The design tests whether prompt-specified configurations translate into behavior consistent with theory, and whether heterogeneity across these dimensions disrupts coordination.

A natural question is whether our experiments identify the cognitive mechanism by which LLM agents collude. They do not. By design. LLMs are black boxes: even frontier diagnostic methods (such as mechanistic interpretability \citealt{bricken2023monosemanticity, narad2025mechanistic}) cannot fully characterize how these models map inputs to outputs, and firms deploying such systems have no more insight into their workings than researchers do. We can (and do) examine agents' chain-of-thought outputs for suggestive patterns, but these are not mechanistic explanations. This limitation, however, does not erase the paper's contribution. From an antitrust and welfare perspective, what harms consumers is elevated prices, not the cognitive process producing them. The same epistemological constraint applies to human collusion: we cannot verify whether executives coordinate through game-theoretic reasoning or heuristics acquired through experience, yet tacit collusion enforcement proceeds by targeting conduct, not cognition. Outcome measurement---whether collusive pricing emerges and persists---is the policy-relevant quantity and, given LLM opacity, the only currently feasible diagnostic. Our experiments are designed accordingly.

We assess collusive outcomes using two criteria: price convergence (whether agents coordinate on stable pricing) and price elevation relative to the static Nash benchmark (the welfare-relevant measure of market power). This dual criterion distinguishes genuine collusion from competitive convergence.

\textbf{Results:} \textit{Sanity checks.} Before testing heterogeneity effects, we verify that our LLM agents respond to prompt-specified objectives in ways consistent with standard theory; absent this, observed price dynamics could reflect instruction-following failures rather than strategic responses to the environment. We confirm that agents recover the one-shot monopoly optimum and the static Bertrand--Nash equilibrium with high precision (within 5\% of theoretical benchmarks). In repeated settings, patient agents converge to supra-competitive pricing while myopic agents price competitively, with outcomes varying monotonically in the specified discount factor. These homogeneous extremes (two patient vs.\ two myopic) serve a dual role: they validate that the prompt-specified horizon shifts behavior in the expected direction, and they provide the baseline against which we measure how heterogeneity attenuates collusion.

Having established that prompt-specified objectives translate into theoretically coherent behavior, we test whether agent responses track the model's comparative-statics predictions, then extend the analysis to heterogeneity dimensions that lack clear theoretical analogues: cross-algorithm heterogeneity and model size. We also examine the implications of increasing the number of competing LLMs. A summary of these results can be found in Table~\ref{tab:results_at_a_glance}.

\begin{table}[ht]
\centering
\caption{Results at a Glance: Collusion Outcomes under Agent Heterogeneity.}
\label{tab:results_at_a_glance}
\small
\setlength{\tabcolsep}{6pt}
\begin{tabular}{lcc}
\hline
\textbf{Experiments} & \textbf{Collusion strength} & \textbf{Price elevation}\\
 & (Low/Med./High) & (\% vs.\ competitive) \\
\hline
\multicolumn{3}{l}{\textit{Homogeneous Benchmarks}} \\
Two myopic LLMs (no collusion emerges) & Low & $\approx 0\%$ \\
Two patient LLMs (tacit collusion emerges)  & High & +22\% \\
\hline
\multicolumn{3}{l}{\textit{Main Heterogeneity dimensions (predicted by model)}} \\
1 patient LLM agent vs.\ 1 myopic LLM agent & Medium & +10\% \\
1 high-data LLM agent vs.\ 1 low-data LLM agent & Medium & +7\% \\
\hline
\multicolumn{3}{l}{\textit{Additional heterogeneity dimensions}} \\
1 patient LLM agent vs.\ 1 Q-learning agent (frozen) & Low & +2\% \\
1 patient LLM agent vs.\ 1 Q-learning agent (adaptive) & Low & +8\%  \\
3 patient LLM agents  & High & +36\% \\
4 patient LLM agents  & Medium & +11\% \\
5 patient LLM agents  & Low & +3\% \\
1 patient 32B LLM agent vs.\ 1 patient 14B-LLM agent & High & +21\% \\
\hline
\end{tabular}
\end{table}

\textit{Testing model predictions.} Both theoretical predictions find support. Patience heterogeneity reduces collusive outcomes: when two patient agents interact, they converge to prices approximately 22\% above the static Nash equilibrium, with pricing rationales that explicitly emphasize avoiding destructive price wars. Setting a patient agent against a myopic one produces meaningfully weaker price elevation, with prices only about 10\% above competitive levels. Fully myopic pairs converge essentially to the competitive outcome. Chain-of-thought inspection indicates that the myopic agent's inability to credibly commit to future punishment weakens the patient agent's incentive to maintain cooperation. Information asymmetry also undermines cooperation: relative to the homogeneous-information benchmark, heterogeneous data access reduces the deviation from the competitive benchmark from 22\% down to 7\%, consistent with the model's prediction that information asymmetry makes tacit coordination more difficult.

\textit{Other Sources of Heterogeneity.} Cross-algorithm heterogeneity disrupts collusion entirely: setting Q-learning agents against LLM agents prevents sustained coordination. Across runs, they do not reach sustained collusion within 1,000 periods and instead remain near competitive pricing. The algorithms operate on incompatible signal spaces: Q-learning agents cannot interpret the cooperative pricing patterns that LLM agents attempt to establish, generating instability that drives both agents toward competitive pricing. 

Increasing the number of competing agents gradually, from two to five, delays collusion onset and weakens its robustness. With three agents competing, sustained collusion remains feasible and can lead to even higher prices, but coordination/convergence becomes  meaningfully slower. With four, prices drop, collusion is further slowed and becomes less stable. With five sellers, sustained collusion fails to emerge within the experimental horizon. These patterns largely align with classical oligopoly theory. 

Perhaps most surprisingly, model-scale heterogeneity does not impede collusion. When pairing larger (32B) and smaller (14B) LLM agents (both patient), final price elevation remains close to the homogeneous benchmark, though convergence is substantially slower. Novel dynamics emerge: the larger model assumes a leadership role and anchors the collusive price, while the smaller model adopts a follower position, suggesting that capability asymmetry can facilitate coordination by establishing clear roles.

Finally, additional robustness tests with a closed-source model (GPT-5-mini) serve two purposes: confirming that our main qualitative findings extend to a more capable model, and examining whether LLM agents exhibit sensitivity to explicit collusion-avoidance prompting---a dimension unique to language-model-based agents. Both hold.

\textbf{Policy Implications:} These findings speak to active policy debates that we discuss in detail in Section~\ref{sec:conclusion}. We provide a brief preview here. 

Our patience-heterogeneity results reveal a mechanism distinct from \citet{calvano2020}, who find collusion diminishes when all agents share a lower discount factor: our findings suggest diversity in strategic horizons itself disrupts coordination. Among other implications, this suggests merger reviews should consider whether transactions homogenize pricing objectives (e.g, capital discount rates), not just algorithms \citep{gal2024algorithms}. 

Our data-access results provide experimental support for the theory underlying the DOJ's RealPage case \citep{FedRegRealPage2026}, which targeted information-sharing as a mechanism enabling coordination. 

\citet{assad2024algorithmic} find that algorithmic pricing increases margins only when all local competitors adopt, suggesting homogeneity may be necessary for coordination. Our cross-algorithm heterogeneity findings offer a mechanism: pairing incompatible algorithmic architectures disrupts coordination entirely, as agents cannot parse each other's signaling strategies. Maintaining ecosystem diversity may therefore complement the platform-level interventions studied by \citet{johnson2023}.

Our market-structure results confirm that classical intuitions about competitor count \citep{stigler1964theory} extend to LLM agents. 

Finally, our capability-asymmetry result offers a cautionary note consistent with \citet{athey2001optimal}: policymakers should not assume that heterogeneous AI capabilities always disrupt coordination; these differences can instead stabilize collusion through leader-follower dynamics. 


\subsection*{Related Literature}

The two closest papers are \citet{calvano2020} and \citet{fish2024}. \citet{calvano2020} show that homogeneous Q-learning agents can learn tacitly collusive pricing, and they document robustness to standard changes in the environment, including cost and demand asymmetries, stochastic demand, and changes in the number of firms. \citet{fish2024} show the corresponding result for LLM pricing agents. They compare prompt variants and show that while wording can affect the level of price elevation, supracompetitive outcomes persist; they also report robustness to demand-side asymmetry (unequal product qualities) and stochastic demand.

Our contribution is different. Even if collusion is relatively robust to prompt wording, real deployments differ along other dimensions, and these differences need not be benign. We therefore purposely hold fixed the baseline prompt structure and ask whether tacit collusion survives cross-firm and cross-algorithm differences. We find that several of these heterogeneities suffice to materially weaken or break sustained price elevation even without introducing additional prompt variability.

More broadly, our work connects three literatures: algorithmic collusion, firm heterogeneity, and platform governance.

On algorithmic collusion, a broader literature confirms that reinforcement learning agents generate supra-competitive outcomes \citep{klein2021, hansen2021, brown2025algorithmic, musolff2025algorithmicpricing}, with mechanisms including spontaneous coupling \citep{banchio2022artificial}.\footnote{Consumer-price implications can differ in richer environments; see \citet{zhao2025algorithmic} for ad auctions with search frictions and \citet{dou2025ai} for securities trading.} However, this literature also reveals sensitivity to implementation details: even minor differences such as exploration rules can shift outcomes \citep{asker2022, abada2023, wang2023}. This sensitivity motivates our focus on heterogeneity. Meanwhile, classical reinforcement learning faces scalability constraints \citep{deng2024, denboer2024}, driving practical interest toward LLMs, which adapt without retraining \citep{miller2024harnessing, tambe2025reskilling, shetty2025advanced}. The real-world deployments discussed above (Project Vend, Delta Airlines) exemplify this shift.

On firm heterogeneity, the classical literature offers competing predictions. More competitors increase deviation incentives \citep{stigler1964theory, green1984noncooperative, miller2017understanding, allon2019information}. Cost asymmetries create divergent incentives \citep{brock1985price, athey2008collusion}, size disparities weaken punishment credibility \citep{genesove2001rules}, and information asymmetries undermine focal equilibria \citep{laffont1997collusion, harrington2011private}. Yet structured heterogeneity can facilitate collusion through focal roles \citep{athey2001optimal, compte2002capacity, brown2023competition}. These insights derive from human decision-makers or simple algorithms. Our theoretical framework builds on the folk theorem logic of \citet{fudenberg1986folk} and the imperfect monitoring results of \citet{APS1986, APS1990}, as described above, to generate predictions for LLM agents.

On governance, recent work proposes platform-level interventions including price-directed prominence \citep{johnson2023}, anti-collusive mechanism design \citep{brero2022}, and unpersonalized rankings \citep{qiu2023}. These target Q-learning dynamics. As discussed, LLM agents pose distinct regulatory challenges because the relevant heterogeneity dimensions (model architecture, prompt design, data access) differ from those studied previously.

The remainder of the paper proceeds as follows. Section~\ref{sec:model} presents the theoretical framework. Section~\ref{sec:experimental_design} describes the experimental design. Sections~\ref{sec:intrinsic_characteristics} and \ref{sec: external_market_conditions} report results. Section~\ref{sec:robustness} conducts additional robustness tests. Section~\ref{sec:conclusion} concludes.

\section{Stylized Model}\label{sec:model}

This section presents a stylized infinite-horizon pricing model with two sellers. As discussed, the model serves as a qualitative benchmark to generate directional hypotheses, not as a structural description of LLM cognition. The model identifies which dimensions of heterogeneity should matter for collusion under canonical game-theoretic reasoning; our experiments then test whether LLM agents---deployed under realistic conditions without access to the model---produce outcomes directionally consistent with these predictions. Section \ref{subsec:model_symmetric} analyzes a baseline case with heterogeneous patience levels, and Section \ref{subsec: model_asymmetric} extends the model to a heterogeneous data access case.

\begin{table}[H]
\centering
\small
\renewcommand{\arraystretch}{0.85}
\caption{Summary of Key Notation}
\begin{tabular}{@{}cl@{}}
\toprule
\textbf{Variable} & \textbf{Definition}\\
\midrule
$N$ & Number of sellers \\
$a$ & Vertical differentiation parameter \\
$b$ & Own-price sensitivity parameter \\
$d$ & Cross-price sensitivity parameter \\
$\mu$ & Price elasticity parameter \\
$a_{0}$ & Outside option parameter \\
$p_i^t$ & Price chosen by seller $i$ in period $t$ \\
$\mathbf{p}^{t}$ & Vector of prices chosen by all $N$ sellers in period $t$ \\
$Q_i(\mathbf{p}^{t})$ & Demand of seller $i$ in period $t$ \\
$\delta_i \in [0,1)$ & Discount factor / patience level of seller $i$ \\
$\rho_i \in [0,1]$& Monitoring precision / data access of seller $i$ \\
$s^t \in \{G, B\}$ & Monitoring signal in period $t$\\
$c$ & Constant marginal cost \\
$\pi_i^t$ & Profit of seller $i$ in period $t$ \\
$\Pi_i$ & Discounted infinite-horizon profit of seller $i$ \\
\bottomrule
\end{tabular}
\label{table: key_notations}
\begin{minipage}{0.95\textwidth}
\vspace{0.5em}
\centering
\footnotesize
\textit{Note:} Non-bold symbols are scalars; bold symbols are vectors.
\end{minipage}
\end{table}

\subsection{Heterogeneous Patience Level}\label{subsec:model_symmetric}

Following the repeated-game framework in \cite{APS1986, APS1990}, we consider a market with two sellers each indexed by $i \in \{1, 2\}$, competing by setting prices in infinitely many discrete periods, indexed by $t = 0, 1, 2, \dots$. In each period, the sequence of events is as follows: (1) sellers simultaneously choose current period prices $p_i^t$; (2) prices are publicly observed and the corresponding market demands are realized; and (3) sellers earn profits based on the realized demands. Both sellers face identical constant marginal costs $c > 0$, and have symmetric linear demand. Specifically, seller $i$'s demand in period $t$ is given by
\begin{equation}\label{eq:linear_demand}
    Q_i(\mathbf{p}^{t}) = a - bp_i^t + dp_{-i}^t \, , 
\end{equation}
where $a > 0$ is the vertical differentiation parameter, $b > 0$ is the own-price sensitivity, and $d>0$ is the cross-price sensitivity. To ensure the existence of equilibrium, we posit the following assumptions: (1) $a> c(b-d)$, (2) $c<\frac{a}{b}$, and (3) $b>d > 0$. Seller $i$'s profit in period $t$ is then defined as
\begin{equation}\label{eq:profit_function}
    \pi_i^t = Q_i^t(p_i^t-c) \, .
\end{equation}

Each seller aims to maximize its discounted infinite-horizon profit given by
\begin{equation}\label{eq:discounted_profit_function}
    \Pi_i = \sum_{t=0}^{\infty} \delta_i^t \pi_i^t \, ,
\end{equation}
where $\delta_i \in [0,1)$ denotes the discount factor (patience level) of seller $i$. A higher $\delta_i$ means that seller $i$ is more patient, placing more weight on future profits, whereas a lower $\delta_i$ corresponds to a more myopic seller who cares primarily about immediate gains. We allow for heterogeneity in patience levels across sellers.

For simplicity, and in line with the simple bang-bang strategies in \cite{APS1986, APS1990}, we assume sellers employ the standard \textit{grim trigger strategy} \citep{friedman1971, dutta2007}. Specifically, seller $i$'s pricing strategy in period $t$ is defined as
\[
p_i^t = 
\begin{cases}
p^c, & \text{if $p_i^{t-1} = p_{-i}^{t-1} = p^c$, or if } t=0,\\[6pt]
p^*, & \text{otherwise,}
\end{cases}
\]
where $p^c \in (p^*, p^M]$ denotes the target price the sellers aim to collude on, $p^*$ denotes the one-shot Nash competitive equilibrium price, and $p^M > p^c$ denotes the one-shot joint-profit-maximizing monopoly price. It is important to note that $p^* = \frac{a+bc}{2b-d} > c$, so the one-shot Nash competitive equilibrium price is strictly higher than the marginal cost. Under the bang-bang pricing restriction, in each period, each seller's feasible price choice is limited to the two focal prices $\{ p^c, p^*\}$, and the one-shot deviation from the collusive price corresponds to switching from $p^c$ to $p^*$.

One consequence of our model assumptions is the following ordering of key payoff terms
\[
\pi(p^{*}_{i}, p^{c}_{-i}) > \pi^c > \pi^{*} > \pi(p^{c}_{i}, p^{*}_{-i}),
\]
where \(\pi(p^{*}_{i}, p^{c}_{-i})\) denotes the deviation payoff, representing the profit seller \( i \) obtains by deviating to the competitive price \(p^*\) while its competitor \(-i\) maintains the collusive price \(p^c\).
\(\pi^c\) denotes the collusive equilibrium payoff, realized by both sellers when they consistently maintain the collusive price 
\(p^c\). The term \(\pi^{*}\) denotes the competitive equilibrium payoff, achieved by both sellers when pricing at the competitive equilibrium price \(p^*\). 
The term \(\pi(p^{c}_{i}, p^{*}_{-i})\) denotes the payoff to seller \( i \) who maintains the higher collusive price \(p^c\) while being undercut by its competitor \(-i\), who chooses the competitive price \(p^*\).

Next, we identify conditions under which price collusion at a given $p^c$ can be sustained. In the following proposition, we characterize the threshold discount factor (i.e., the required level of patience among sellers) that enables stable and ongoing collusion, and how this threshold varies with the target collusive price.

\begin{proposition}\label{prop:symmetric_info_threshold}
    (Heterogeneous Patience Level Collusion Threshold) 
    For any given price $p^c \in (p^*, p^M]$:

    \begin{enumerate}
        \item [(i)] There exists a patience level threshold $\bar \delta(p^c) = \frac{\pi(p^{*}_{i}, p^{c}_{-i})- \pi^c}{\pi(p^{*}_{i}, p^{c}_{-i})- \pi^*}$ 
    such that two sellers
    can sustain collusion at price $p^c$ infinitely if and only if $\delta_i \geq \bar{\delta}(p^c)$ for $i = 1,2$.

    \item [(ii)] The threshold $\bar{\delta}(p^c)$ is strictly increasing in $p^c$.
    \end{enumerate} 
\end{proposition}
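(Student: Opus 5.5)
Part (i) follows from the one-shot deviation principle applied to the grim trigger profile, checked for each seller separately; part (ii) is a direct computation with the linear demand \eqref{eq:linear_demand}.

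\textbf{Part (i).} I would verify subgame perfection phase by phase.

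In the punishment phase, both sellers play $p^*$ forever. This is a repetition of the static Nash equilibrium, so no profitable one-shot deviation exists for any $\delta_i$. Under the bang-bang restriction the only deviation is to $p^c$, which yields $\pi(p^c_i,p^*_{-i})<\pi^*$ and leaves the continuation unchanged.

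In the cooperative phase, seller $i$ compares two payoff streams:
\begin{itemize}
\item conforming gives $\pi^c/(1-\delta_i)$;
\item deviating to $p^*$ gives $\pi(p^*_i,p^c_{-i})+\delta_i\pi^*/(1-\delta_i)$.
\end{itemize}
Writing $\pi^D:=\pi(p^*_i,p^c_{-i})$, conforming is optimal if and only if $\pi^c\ge(1-\delta_i)\pi^D+\delta_i\pi^*$. Rearranging gives
\[
\delta_i\ge\frac{\pi^D-\pi^c}{\pi^D-\pi^*}=\bar\delta(p^c).
\]
This uses $\pi^D>\pi^*$ from the stated payoff ordering.

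Collusion is sustained exactly when neither seller wants to deviate, which gives the condition for $i=1,2$ separately. For the ``only if'' direction, note that if $\delta_i<\bar\delta$, seller $i$ strictly gains by deviating at $t=0$, so the path is not an equilibrium. Symmetric demand makes $\pi^D$, $\pi^c$ and $\pi^*$ identical across sellers, so there is a single common threshold. I would also note that the ordering gives $\bar\delta\in(0,1)$.

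\textbf{Part (ii).} Let $x=p^c$ and compute the three profits.
\begin{itemize}
\item $\pi^c(x)=(a-(b-d)x)(x-c)$.
\item $\pi^D(x)=(a-bp^*+dx)(p^*-c)$, which is linear in $x$ with slope $d(p^*-c)>0$.
\item $\pi^*$ is constant in $x$.
\end{itemize}
Write $\bar\delta=1-\frac{\pi^c-\pi^*}{\pi^D-\pi^*}$. Showing $\bar\delta$ is increasing is then equivalent to showing $R(x)=\frac{\pi^c(x)-\pi^*}{\pi^D(x)-\pi^*}$ is strictly decreasing on $(p^*,p^M]$.

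The numerator $\pi^c(x)-\pi^*$ is a concave quadratic vanishing at $x=p^*$. I would factor it as $(x-p^*)\,[\,\cdot\,]$. The denominator is $d(p^*-c)(x-p^*)+C_0$, where $C_0=\pi^D(p^*)-\pi^*=0$, because at $x=p^*$ the deviation is no deviation. So the denominator is $d(p^*-c)(x-p^*)$ and the common factor $(x-p^*)$ cancels.

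Expanding $\pi^c(x)-\pi^c(p^*)$ gives
\[
(x-p^*)\bigl[a+(b-d)c-(b-d)(x+p^*)\bigr].
\]
Hence
\[
R(x)=\frac{a+(b-d)c-(b-d)(x+p^*)}{d(p^*-c)},
\]
which is strictly decreasing in $x$ because $b>d$. Therefore $\bar\delta$ is strictly increasing.

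The point where I expect care is needed is this cancellation, i.e.\ recognising that $\pi^D(p^*)=\pi^*$ and $\pi^c(p^*)=\pi^*$. Without it, one would have to sign a messy quotient-rule derivative. Once the cancellation is in hand, monotonicity is immediate. I would finally check that $R>0$ on $(p^*,p^M]$, which is consistent with $\pi^c>\pi^*$ there, so that $\bar\delta<1$.
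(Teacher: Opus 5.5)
Your proposal is correct and follows essentially the same route as the paper: part (i) uses the same conform-versus-deviate comparison to obtain $\bar\delta(p^c)$. Part (ii) uses the same factorization $\pi^c-\pi^*=(p^c-p^*)[a+(b-d)c-(b-d)(p^c+p^*)]$ over $\pi(p_i^*,p_{-i}^c)-\pi^*=(p^*-c)d(p^c-p^*)$, cancelling $(p^c-p^*)$ to leave a term strictly decreasing in $p^c$; your explicit punishment-phase check via the one-shot deviation principle is a small addition the paper leaves implicit.
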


Part (i) establishes that sustaining collusion requires both sellers to be sufficiently patient ($\delta_i \geq \bar \delta$). When patient enough, the threat of future punishment outweighs the short-term gain from deviation; otherwise, collusion breaks down. Part (ii) shows that the patience threshold increases in the target collusive price $p^c$: as $p^c$ rises, deviation becomes more attractive because the rival's higher price shifts demand toward the deviating seller.

\subsection{Heterogeneous Data Access}\label{subsec: model_asymmetric}


We extend the model to an imperfect monitoring environment that captures differences in sellers' ability to observe rivals' pricing histories, which provide a stylized benchmark for the heterogeneity in data access experiment we run later in Section \ref{subsec: information_env_heterogeneity}.

The economic environment is as in Section \ref{subsec:model_symmetric}: two sellers with patience $\delta_i$ compete in an infinitely repeated Bertrand game with linear demands \eqref{eq:linear_demand}, identical marginal cost $c$, and objectives \eqref{eq:discounted_profit_function}. The difference is that sellers vary in the precision with which they detect deviations, which we interpret as differences in data access.

Formally, after prices are chosen in period $t$, sellers observe a public monitoring signal $s^t \in \{G, B\}$ about the pricing behaviors of sellers in that period. If seller $-i$ deviates from the collusive price $p^c$ in period $t$, then seller $i$ detects the deviation with probability $\rho_i \in [0,1]$. If the deviation is detected, a bad public signal $s^t = B$ is realized and observed by both sellers; otherwise, the public signal remains $s^t = G$.
If both sellers price at the collusive price $p^c$ in period $t$, then the public signal is always $s^t = G$. That is, we assume no false alarms for simplicity.  Therefore, the parameter $\rho_i$  represents seller $i$'s monitoring precision or data access: $\rho_i = 1$ corresponds to perfect observation / full access of the rival's price (as in the case of Section \ref{subsec:model_symmetric}), while $\rho_i <1$ corresponds to limited or noisy data access, with a lower $\rho_i$ corresponding to more limited or noisy data access.

We continue to assume that sellers employ the standard grim-trigger strategy, adapted to this imperfect monitoring environment. As in Section \ref{subsec:model_symmetric}, we maintain the bang-bang pricing restriction that each seller's feasible price choice is limited to $\{p^c, p^*\}$. Specifically, each seller $i$ starts by setting the target collusive price $p^c$ in period $t = 0$ and continues to set $p^c$ as long as a bad signal has never been observed. Once a bad signal $s^\tau = B$ is observed in some period $t = \tau$, both sellers permanently revert to the competitive equilibrium price $p^*$ in all subsequent periods. Formally, seller $i$'s pricing strategy in period $t$ can be written as 
\[
p_i^t = 
\begin{cases}
p^c, & \text{if $s^\tau = G$ for all $\tau < t$,}\\[6pt]
p^*, & \text{otherwise,}
\end{cases}
\]
where $p^c \in (p^*, p^M]$ denotes the target price the sellers aim to collude on, $p^* > c$ denotes the one-shot competitive equilibrium price, and $p^M > p^c$ denotes the one-shot joint-profit-maximizing monopoly price. 

As before, let \(\pi^c\) denote the collusive equilibrium payoff, \(\pi^{*}\) the competitive equilibrium payoff, \(\pi(p^{*}_{i}, p^{c}_{-i})\) the deviation payoff to seller \( i \) from undercutting its rival at \(p^c\) by moving to \(p^*\), and \(\pi(p^{c}_{i}, p^{*}_{-i})\) the payoff to seller \( i \) who maintains the higher collusive price \(p^c\) while being undercut by its competitor \(-i\) who deviates to the competitive price \(p^*\). We have the same payoff ordering as in Section \ref{subsec:model_symmetric}:
\[
\pi(p^{*}_{i}, p^{c}_{-i}) > \pi^c > \pi^{*} > \pi(p^{c}_{i}, p^{*}_{-i}).
\]
Next, we identify conditions under which price collusion at a given $p^c$ can be sustained under imperfect monitoring. In the following proposition, we characterize the threshold discount factor (i.e., the required level of patience among sellers) that enables stable and ongoing collusion, and how this threshold varies with the monitoring precision.

\begin{proposition}\label{prop:asymmetric_info_threshold}
    (Heterogeneous Data Access Collusion Threshold) For any given price $p^c \in (p^*, p^M]:$ 
    
    \begin{enumerate}
    \item[(i)] For each seller $i$, there exists a patience level threshold
    $\bar{\delta}_i(p^c,\rho_{-i})
    = \frac{\pi(p_i^*,p_{-i}^c)-\pi^c}
    {\pi(p_i^*,p_{-i}^c)-\pi^c + \rho_{-i}(\pi^c-\pi^*)}$
    such that two sellers can sustain collusion at price $p^c$ infinitely if and only if $\delta_i \;\ge\; \bar{\delta}_i(p^c, \rho_{-i})$ for $i = 1, 2$.
    
    \item[(ii)]
    The threshold $\bar{\delta}_i(p^c, \rho_{-i})$ is strictly decreasing in the rival's monitoring precision $\rho_{-i}$.
\end{enumerate}

\end{proposition}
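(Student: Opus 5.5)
The plan mirrors the argument behind Proposition~\ref{prop:symmetric_info_threshold}: compare, for each seller $i$ separately, the discounted value of staying on the collusive path with the value of the most profitable one-shot deviation, and appeal to the one-shot deviation principle. Under the bang-bang restriction the only deviation is $p^c\to p^*$.

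\textbf{Value of collusion.} If both sellers price at $p^c$, the signal is always $G$ because there are no false alarms. Collusion therefore persists forever, and seller $i$'s value is
\[
V_i^c=\frac{\pi^c}{1-\delta_i}.
\]

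\textbf{Value of a deviation.} Suppose seller $i$ deviates in some period. It earns $\pi(p_i^*,p_{-i}^c)$ that period. With probability $\rho_{-i}$ the rival detects the deviation, a $B$ signal is realized, and both sellers play $p^*$ forever, giving continuation $\pi^*/(1-\delta_i)$. With probability $1-\rho_{-i}$ the signal stays $G$.

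After an undetected deviation, the strategies prescribe $p^c$ again. For a one-shot deviation the continuation value is then $V_i^c$, and the one-shot deviation value is
\[
V_i^d=\pi(p_i^*,p_{-i}^c)+\delta_i\Big[\rho_{-i}\tfrac{\pi^*}{1-\delta_i}+(1-\rho_{-i})V_i^c\Big].
\]

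One might worry about repeated deviations. The no-profitable-one-shot-deviation condition at every $G$-history is sufficient by the one-shot deviation principle, since payoffs are bounded and discounting makes the game continuous at infinity. Every $G$-history is strategically identical, because the continuation game is stationary. So it suffices to impose $V_i^c\ge V_i^d$ at a single $G$-history.

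Off the path, after a $B$ signal, both sellers play $p^*$ forever. This is a repetition of the static Nash equilibrium, so there is no profitable deviation there.

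\textbf{Solving for the threshold.} Rearranging $V_i^c\ge V_i^d$ gives
\[
(1-\delta_i(1-\rho_{-i}))\frac{\pi^c}{1-\delta_i}\ \ge\ \pi(p_i^*,p_{-i}^c)+\delta_i\rho_{-i}\frac{\pi^*}{1-\delta_i}.
\]
Multiplying by $1-\delta_i$ yields
\[
\pi^c-\delta_i\pi^c+\delta_i\rho_{-i}\pi^c\ \ge\ (1-\delta_i)\pi(p_i^*,p_{-i}^c)+\delta_i\rho_{-i}\pi^*,
\]
that is,
\[
\delta_i\big[\pi(p_i^*,p_{-i}^c)-\pi^c+\rho_{-i}(\pi^c-\pi^*)\big]\ \ge\ \pi(p_i^*,p_{-i}^c)-\pi^c.
\]
The bracket is strictly positive by the payoff ordering, which gives $\delta_i\ge\bar\delta_i(p^c,\rho_{-i})$.

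This argument runs in both directions. If the inequality fails for some $i$, then that seller's one-shot deviation is strictly profitable and collusion cannot be sustained. This yields the ``if and only if'' in part (i), imposed for $i=1,2$ jointly.

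One edge case needs care: $\rho_{-i}=0$. Then $\bar\delta_i=1$, which is unattainable since $\delta_i<1$. This is consistent with the statement: collusion is impossible without detection.

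\textbf{Part (ii).} Write $\bar\delta_i=A/(A+\rho_{-i}B)$ with $A=\pi(p_i^*,p_{-i}^c)-\pi^c>0$ and $B=\pi^c-\pi^*>0$. Then
\[
\frac{\partial\bar\delta_i}{\partial\rho_{-i}}=-\frac{AB}{(A+\rho_{-i}B)^2}<0,
\]
so the threshold is strictly decreasing in the rival's monitoring precision.

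\textbf{Main obstacle.} Everything in this plan is algebra except the sufficiency claim in part (i). That claim rests on the one-shot deviation principle together with stationarity of the continuation after undetected deviations. I would state it explicitly and note that the payoff ordering assumed in the paper guarantees that both $A$ and $B$ are positive.
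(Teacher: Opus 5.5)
Your proposal is correct and follows the same route as the paper's proof. You compare the on-path value $\pi^c/(1-\delta_i)$ with the one-shot deviation value $\pi(p_i^*,p_{-i}^c)+\frac{\delta_i}{1-\delta_i}\big(\rho_{-i}\pi^*+(1-\rho_{-i})\pi^c\big)$, rearrange to obtain $\bar\delta_i(p^c,\rho_{-i})$, and sign the derivative in $\rho_{-i}$ via the payoff ordering; your appeal to the one-shot deviation principle, the stationarity of $G$-histories, and the off-path check after a $B$ signal are steps the paper leaves implicit, so they add rigor rather than change the argument.
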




Part (i) indicates that under imperfect monitoring, sustaining collusion still requires sufficient patience, but the threshold now depends on the rival's monitoring precision. With imperfect monitoring, a deviation by seller $i$ triggers punishment only if detected, which occurs with probability $\rho_{-i}$.

Part (ii) shows that the patience threshold $\bar{\delta}_i(p^c, \rho_{-i})$ is strictly decreasing in $\rho_{-i}$. When the rival can monitor deviations accurately (high $\rho_{-i}$), deviations are likely detected and punished, lowering the patience needed to sustain cooperation. When the rival has poor data access (low $\rho_{-i}$), deviating sellers can often escape detection, making deviation more attractive.

\section{Experimental Design}\label{sec:experimental_design}

This section describes our experimental design, which will allow us not only to test Propositions 
1 and 2, but also conduct heterogeneity experiments that go beyond the scope of the model. Before proceeding, we clarify the relationship between theory 
and experiment.

\subsection{Mapping Theoretical Parameters to Experimentation}\label{sec: map}

As mentioned, our experimental design deliberately withholds the model and its stylized assumptions from the agents. Agents in practice do not know precise demand primitives, cost structures, or equilibrium concepts; they learn from their pre- and post-training and information feedback loops with the user and the environment they are deployed in. Our experiments aim to preserve this realism: agents observe only their own prices, realized demand, and profits, plus (in some treatments) rivals' price histories. 

We emphasize (again) the model's purpose is to identify which dimensions of heterogeneity should matter under canonical game-theoretic reasoning; the experiments test whether LLM agents produce directionally consistent outcomes in more realistic deployment environments, despite lacking access to this structure.

Before detailing the experimental setup, we discuss how stylized modeling parameters can be interpreted in the context of more realistic deployments. Table~\ref{tab:mapping} summarizes the discussion.

\textbf{Patience.} The discount factor $\delta_i$ maps to the optimization horizon, which is directly specified in each agent's prompt. It is a governance decision firms make when configuring agents. How LLMs internally process this instruction is unknown, but for our purposes irrelevant: we test whether prompt-specified horizons produce outcomes directionally consistent with theoretical predictions, not whether agents implement the theory's logic.

\textbf{Monitoring precision.} The parameter $\rho_i$ maps to data access: whether an agent observes rivals' price histories. In our experiments, high-information agents observe their own price, demand, and profit history plus rivals' historical prices; low-information agents observe only their own data. 

\textbf{Demand and strategies.} To be consistent with \citet{calvano2020}, we relax the model's linear demand assumption to a multinomial logit (MNL) demand function; both exhibit strategic complementarity and the payoff orderings required for our comparative statics. Further, we do not impose grim-trigger strategies; the goal is instead to test whether punishment-like dynamics emerge endogenously.

\begin{table}[htbp]
\centering
\caption{Mapping from Theory to Experiment}
\label{tab:mapping}
\small
\begin{tabular}{p{3.5cm}p{5.cm}p{6.3cm}}
\toprule
\textbf{Theory} & \textbf{Experiment} & \textbf{Interpretation in practice} \\
\midrule
Discount factor $\delta$ & Prompted optimization horizon & Governance decisions \\
Monitoring precision $\rho$ & Access to rival price history & Data infrastructure differences\\
Grim-trigger strategies & Not imposed & Test if punishment emerges endogenously \\
Linear demand & MNL demand & More realistic demand model\\
\bottomrule
\end{tabular}
\end{table} 

We now outline the economic environment (Section~\ref{subsec: economic_environment}) 
and agent implementation (Section~\ref{subsec: LLM_pricing_agent}).

\subsection{Economic Environment}\label{subsec: economic_environment}



For each discrete period $t\in\{0,1,2,...\}$,  each of \(N\) sellers sells a homogeneous product and faces the following demand function
\begin{equation}
\label{eq: mnl_demand_function}
Q_{i}(\mathbf{p}^{t}) = \frac{\exp(\frac{a-p_{i}^{t}}{\mu})}{\sum\limits_{j=1}^{N}\exp(\frac{a-p_{j}^{t}}{\mu})+\exp(\frac{a_{0}}{\mu})}, 
\end{equation}
where \(a\) represents the vertical differentiation parameter, $\mu$ the own-price sensitivity parameter, and $a_{0}$ the value of the outside option. 

Similar to the model in Section \ref{sec:model}, each seller bears a constant marginal cost \(c\) and has the same per-period profit function as \eqref{eq:profit_function}. Not knowing how many periods the game will last, each seller aims to maximize its discounted infinite-horizon profit given in \eqref{eq:discounted_profit_function}. Unless otherwise noted, we follow \citet{calvano2020} and set \(a\) = 2, $\mu=0.25$, $c=1$, \(a_{0}\) = 0, and $\delta$ = 0.95.

\subsection{LLMs as Pricing Agents}\label{subsec: LLM_pricing_agent}
Large language models (LLMs) are AI systems trained on extensive text data to predict the next token in a sequence. In practice, LLMs take textual instructions (prompts) and generate contextually relevant responses \citep{lee2025workforce, yang2024generative}. Leading LLMs such as OpenAI's GPT series, Google Gemini, and DeepSeek follow a two-phase training process: pretraining on large text corpora, then fine-tuning on specific tasks \citep{openai2023, gemini2025, guo2025deepseek}.

We base our pricing agents on (locally-run) DeepSeek-R1-Distill-Qwen-32B, an open-source LLM optimized for reasoning-intensive tasks. DeepSeek-R1 achieves advanced reasoning through Group Relative Policy Optimization (GRPO), a reinforcement learning technique that rewards answers based on performance against peer alternatives; these capabilities are then distilled into the computationally efficient Qwen-2.5-32B architecture \citep{guo2025deepseek, qwen2.5}.

We choose DeepSeek for three reasons. First, unlike proprietary models, DeepSeek is fully open-source, ensuring transparency and reproducibility. Second, its computational efficiency lowers deployment barriers, making our results more generalizable to smaller firms that may not have the infrastructure or budget to run frontier models locally. Third, as we document in Sections \ref{subsubsec: one_shot_game} and \ref{subsubsec: repeated_monopoly}, this leaner architecture retains sufficient reasoning ability to perform competitively in both one-shot and repeated pricing games.

\subsubsection{Pricing Agent Design}\label{subsubsec: agent_design}
At the start of each period, an LLM pricing agent is given a prompt containing the following information (see Appendix \ref{subsec: repeated_bertrand_oligopoly_prompt} for the full template).
\begin{enumerate}[nosep, leftmargin=*, topsep=0pt]
  \item \textbf{Game instruction} -- Introduces the LLM to its role (seller),
        the competitive setting (number of sellers and current round),
        and states its primary objective: maximize discounted profits.
  \item \textbf{Product and seller information} -- Provides the marginal cost $c$ and the discount factor $\delta$ it should optimize over.
  \item \textbf{Most recent pricing strategy} -- When the game is beyond the
        first round, echoes back the seller's own previously declared
        multi-period pricing plan so the model can refine or adjust it.
  \item \textbf{Market history} -- Supplies up to the 100 most recent rounds of
        data: the agent's own prices, realized demand, and profit, plus the
        opponent's prices, giving the LLM empirical context for learning and
        strategizing.
  \item \textbf{Response template} -- Prompts the model to (i)
        return the current round number in\\ \texttt{<round>$\dots$</round>},
        (ii) output a boxed price for this round, (iii) justify that choice in\\
        \texttt{<rationale>...\!</rationale>}, and (iv) propose a forward-looking
        pricing plan in\\ \texttt{<strategy>...\!</strategy>}.
\end{enumerate}

Figure~\ref{fig: workflow} visualizes our experimental workflow for the case of two LLM agents. Consistent with the convergence condition in \citet{fish2024}, we treat prices  as ``converged'' once the following condition holds: for 100 consecutive periods, the absolute difference between the prices charged by any two sellers is no more than 5\% of that period's lowest price. If this criterion is satisfied, the game ends. Otherwise, the game continues until period 1,000. 

Since each period requires sequential multi-agent LLM inference, executing a full 1,000-period game run is computationally intensive and takes approximately 48 - 72 hours on average (varying by the number of LLM agents). For each experimental condition, we therefore conduct 10 independent runs to ensure robustness of results while balancing computational feasibility. Across Section \ref{sec:intrinsic_characteristics} and \ref{sec: external_market_conditions}, we examine a total of 10 experimental conditions spanning multiple heterogeneity conditions, totaling 100 repeated-game independent runs and up to 100,000 simulated periods. In Table \ref{table:experiment_coverage}, we summarize the experimental coverage.

\begin{figure}[htbp]
  \centering

  \includegraphics[width=0.8\textwidth]{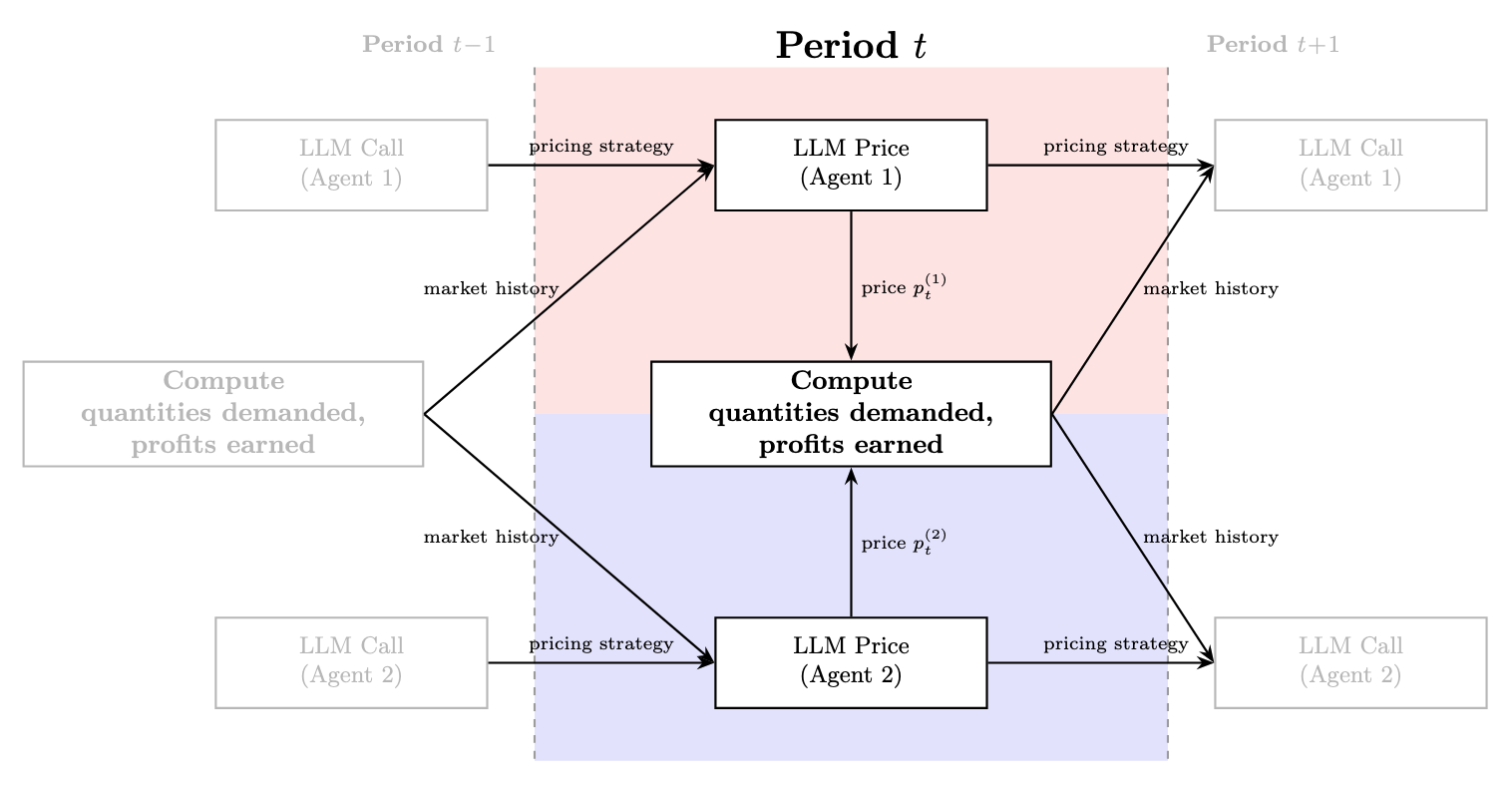}


    \caption{Experimental workflow of two LLM agents (Section~\ref{subsubsec: agent_design}). In each period, each agent re-prompts its LLM with (i) its previous pricing strategy and (ii) recorded market history to generate a price. The market receives the prices and calculates quantity demanded and profit for each agent. Each agent's history records both sellers' prices and its own quantities demanded and profits for the last 100 periods.}
  \label{fig: workflow}
\end{figure}
\FloatBarrier

\begin{table}
\centering
\caption{Experimental Coverage Across All Heterogeneity Dimensions}
\label{table:experiment_coverage}
\small
\renewcommand{\arraystretch}{0.95}
\setlength{\tabcolsep}{6pt}
\begin{tabular}{lcccc}
\toprule
\textbf{Heterogeneity} & \textbf{\# of} & \textbf{\# of} & \textbf{Max. Periods} & \textbf{Approximate} \\
\textbf{Dimension} & \textbf{Conditions} & \textbf{Runs} & \textbf{Per Run} & \textbf{Runtimes} \\
\midrule
Patience levels (Section \ref{subsec: two_patient_sellers} -- \ref{subsec: one_patient_one_myopic}) & 3 & 30 & 1,000 & $\approx$ 238 hrs \\
Data access (Section \ref{subsec: information_env_heterogeneity}) & 1 & 10 & 1,000 & $\approx$ 73 hrs \\
Algorithm type (Section \ref{subsec: algorithmic_heterogeneity}) & 2 & 20 & 1,000 & $\approx$ 520 hrs \\
Number of sellers (Section \ref{subsec: number_of_sellers}) & 3 & 30 & 1,000 & $\approx$ 1,184 hrs \\
Model size (Section \ref{subsec: model_size_heterogeneity}) & 1 & 10 & 1,000 & $\approx$ 220 hrs \\
\midrule
Total & 10 & 100 & 100,000 &  2,235 hrs\\
\bottomrule
\end{tabular}
\vspace{0.5em}
\begin{minipage}{0.95\textwidth}
\footnotesize
\textit{Note:} The reported runtime corresponds to a full 1,000-period run and averages approximately 48 - 72 hours due to the computational cost of sequential multi-agent LLM inference. Each run terminates early once the pre-specified sustained collusion condition is reached, which reduces runtime.
\end{minipage}
\end{table}

\section{Empirical Results - Testing Model Predictions}\label{sec:intrinsic_characteristics}

In this section, we test the directional predictions from Section \ref{sec:model}. The results are summarized in Table~\ref{tab:results}.

Before discussing the experiments, we conduct multiple sanity checks to ensure that DeepSeek-R1-Distill-Qwen-32B can solve standard pricing problems and that subsequent dynamics are not driven by basic optimization or instruction-following failures. First, in one-shot Bertrand duopoly and monopoly settings, we explicitly provide the MNL demand and profit expressions and verify that the model reliably returns valid outputs and near-optimal prices under non-default parameters. Second, because our main experiments hide demand primitives, we test whether the agent can learn the monopoly price from observed outcomes in a repeated setting, and find that it explores and ultimately converges to within 1\% of the theoretical monopoly benchmark. Overall, these tests confirm that the agent can both optimize when given the game primitives and learn near-optimal pricing through repeated feedback without them, providing a clean foundation for our experiments. Full prompts and implementation details are provided in Appendix~\ref{sec:sanity}, \ref{subsec: one_shot_Betrand_prompt}, \ref{subsec: one_shot_monopoly_prompt}, and~\ref{subsec: repeated_monopoly_prompt}.

The rest of this section is organized as follows. We start with a homogeneous benchmark scenario involving two agents with identical patience levels and data access in Section \ref{subsec: two_patient_sellers}. Then, we introduce heterogeneity in agent patience levels in Section \ref{subsec: one_patient_one_myopic}, and heterogeneity in data access in Section \ref{subsec: information_env_heterogeneity}. To ensure robustness of results, for each experimental condition, we conduct 10 independent runs of 1,000 periods each (unless pre-specified sustained collusion is reached).\footnote{Note, the choice of 10 independent repeated-game runs reflects the high computational cost of multi-agent LLM interactions. This is comparable in scale to the number of runs conducted in \cite{fish2024}. Our focus is therefore on robust directional patterns that appear consistently across runs and heterogeneity dimensions, rather than on precise estimation of marginal effects.} For each experimental condition, we first report outcomes aggregated across runs and then zoom in on a representative run to illustrate the within-run pricing dynamics and provide qualitative evidence from agents' textual rationales. For convenience, we summarize all experimental results in Table \ref{tab:results}.

\begin{table}[ht!]
\centering
\caption{Summary of Experimental Results}
\label{tab:results}
\small
\begin{tabular}{lcccc}
\toprule
\textbf{Condition} & \textbf{Rounds to} & \textbf{Avg.\ Price} & \textbf{\% Price Elevation} \\
 & \textbf{Convergence (SD)} & \textbf{(SD)} & \textbf{vs.\ Competitive} \\
\midrule
\multicolumn{4}{l}{\textit{Homogeneous Benchmarks (Section~\ref{subsec: two_patient_sellers})}} \\
Two Myopic LLMs  & 159.0 (26.0) & 1.472 (0.012) & $\approx 0\%$ \\
Two Patient LLMs  & 195.1 (28.8) & 1.801 (0.027) & +22\% \\
\midrule
\multicolumn{4}{l}{\textit{Main Heterogeneity Dimensions (predicted by model)}} \\
Patient vs.\ Myopic LLMs (\S\ref{subsec: one_patient_one_myopic}) & 140.8 (29.0) & 1.619 (0.023) & +10\% \\
High Data vs.\ Low-Data LLMs (\S\ref{subsec: information_env_heterogeneity}) & 151.2 (26.1) & 1.576 (0.007) & +7\% \\
\midrule
\multicolumn{4}{l}{\textit{Additional Heterogeneity Dimensions}} \\ 
LLM vs.\ Frozen Q-learning & Did not converge & 1.506 (0.022) & +2\% \\
LLM vs.\ Adaptive Q-learning & Did not converge & 1.583 (0.036) & +8\% \\
Three LLMs (all patient) & 208.3 (22.4) & 2.000 (0.018) & +36\% \\
Four LLMs (all patient) & 542.1 (31.2) & 1.632 (0.013) & +11\% \\
Five LLMs (all patient) & Did not converge & 1.514 (0.057) & +3\% \\
32B vs.\ 14B LLMs (both patient) & 458.4 (26.3) & 1.775 (0.013) & +21\% \\
\bottomrule
\end{tabular}
\vspace{0.5em}
\begin{minipage}{0.95\textwidth}
\footnotesize
\textit{Note:} Results from 10 independent runs per condition (up to 1,000 
periods each). Price elevation $= (\bar{p} - p^C)/p^C$, where 
$p^C \approx \$1.47$ is the static Nash equilibrium price. ``Did not 
converge'' indicates the convergence criterion was not met within 1,000 
periods. In this case, we report the average terminal lowest price among agents in period 1,000.
\end{minipage}
\end{table}

\subsection{Homogeneous Patience Levels (Benchmarks)}\label{subsec: two_patient_sellers}

We begin our experiments with a homogeneous benchmark scenario involving two patient agents, each with an identical discount factor $\delta = 0.95$. This benchmark serves as our primary baseline for all subsequent comparisons. Across 10 independent runs, agents reach the pre-specified convergence condition every time, doing so after an average of 195.1 rounds at an average price of \$1.80. These prices are approximately 6\% below the theoretical monopoly price ($p^{M} \approx \$1.92$) yet 22\% above the competitive benchmark ($p^{C} = \$1.47$).

Panel (a) of Figure~\ref{fig:llm_llm_benchmark} illustrates the pricing dynamics from a representative run. Two phases emerge. In the \textit{exploratory phase} (Rounds 1--75), both agents test different pricing levels: Agent 1 opens at \$5.00, then oscillates between \$2.50 and \$2.00; Agent 2 begins at \$1.80 and holds a narrower range (\$1.80--\$2.00), initiating a slight downward drift around Round 23. After Round 76, both agents begin to converge and enter the \textit{collusive phase}, with prices settling at \$1.793 (SD = 0.011) for Agent 1 and \$1.796 (SD = 0.005) for Agent 2. The agents' chain-of-thought outputs---while not mechanistic explanations of internal computation---are consistent with cooperative orientation, frequently referencing themes of stability and price-war avoidance:

\begin{llmthought}
Agent 1, Round 83: \say{\$1.79 is set slightly below the opponent's offer, ensuring long-term sustainability by avoiding a potential price war.}

Agent 2, Round 16: \say{The discount factor of 0.95 stresses future gains without engaging in a ruinous price war.}
\end{llmthought}

Next, we consider a contrasting scenario involving two myopic agents ($\delta = 0$). Across 10 runs, prices stabilize quickly at the competitive level (\$1.47) rather than at supra-competitive levels. Panel (b) of Figure~\ref{fig:llm_llm_benchmark} displays representative dynamics. Together, these benchmarks closely align with the predictions from Proposition~\ref{prop:symmetric_info_threshold}: when both agents are sufficiently patient, supra-competitive pricing can be sustained, whereas myopic agents converge to the competitive equilibrium. 

We use these two homogeneous extremes as reference points in the heterogeneity analyses that follow: they both validate that the prompt-specified horizon shifts behavior in the expected direction and provide a common baseline for quantifying how heterogeneity attenuates collusion.

\begin{figure}
  \centering
\begin{minipage}{0.5\linewidth}
  \centering
  \begin{tikzpicture}
    \node[inner sep=0pt] (img) {%
      \includegraphics[width=\linewidth]{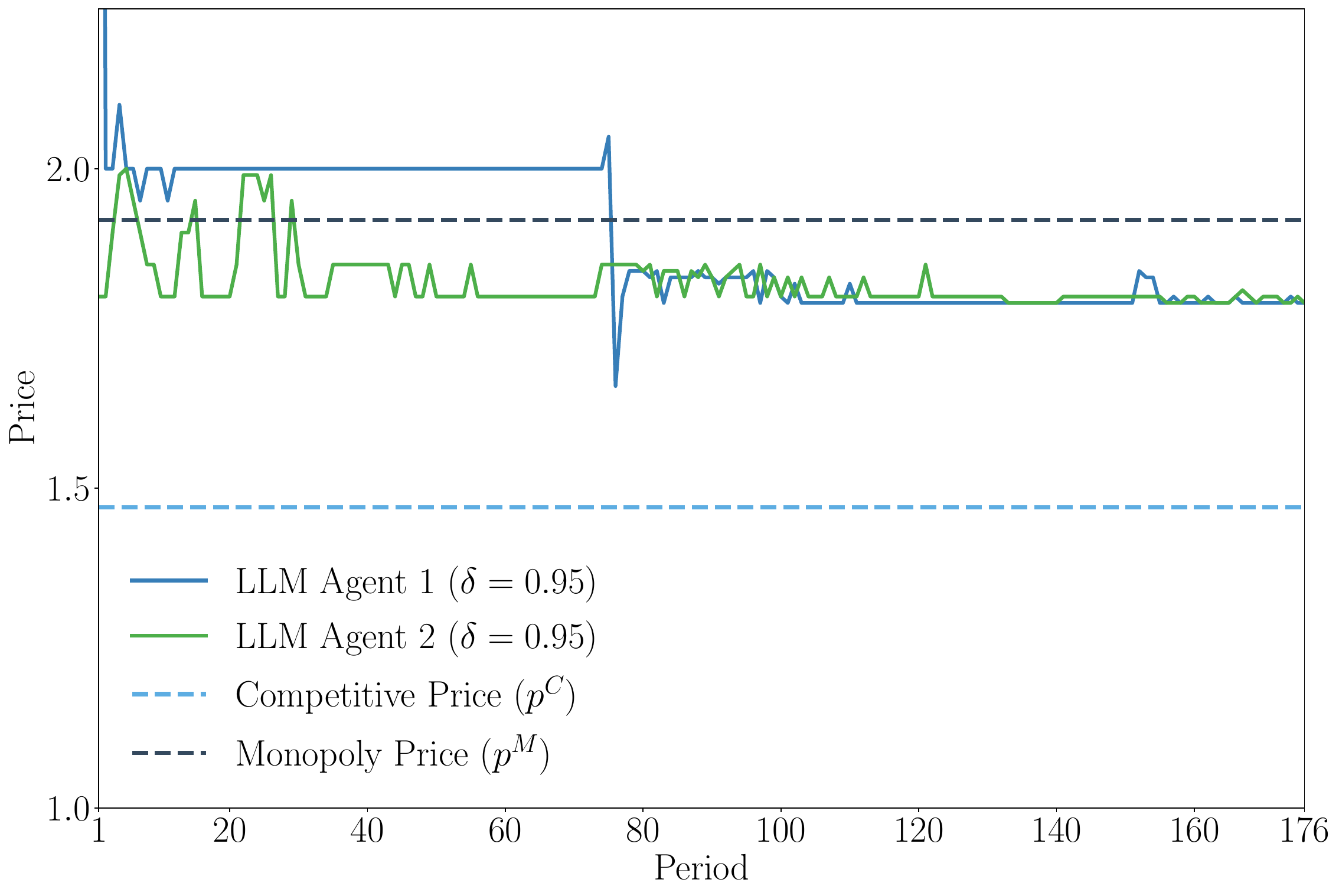}%
    };

    \def\x{0.75}      
    \def\ymin{0.43}   
    \def\ymax{0.67}
    \def\xlabel{0.05} 

    \coordinate (xb) at ($(img.south west)!\x!(img.south east)$);
    \coordinate (xt) at ($(img.north west)!\x!(img.north east)$);

    \coordinate (A) at ($(xb)!\ymin!(xt)$);
    \coordinate (B) at ($(xb)!\ymax!(xt)$);

    \draw[->, very thick] (A) -- (B);
\node[anchor=west, font=\footnotesize, align=left]
  at ($(A)!0.5!(B) + (\xlabel,0)$) {$+22\%$ lift\\[1pt](collusion)};

  \end{tikzpicture}
  \par\smallskip
  \makebox[\linewidth][c]{\footnotesize (a) Homogeneous LLMs with High Patience}
\end{minipage}\hfill
  \begin{minipage}{0.5\linewidth}
    \centering
    \includegraphics[width=\linewidth]{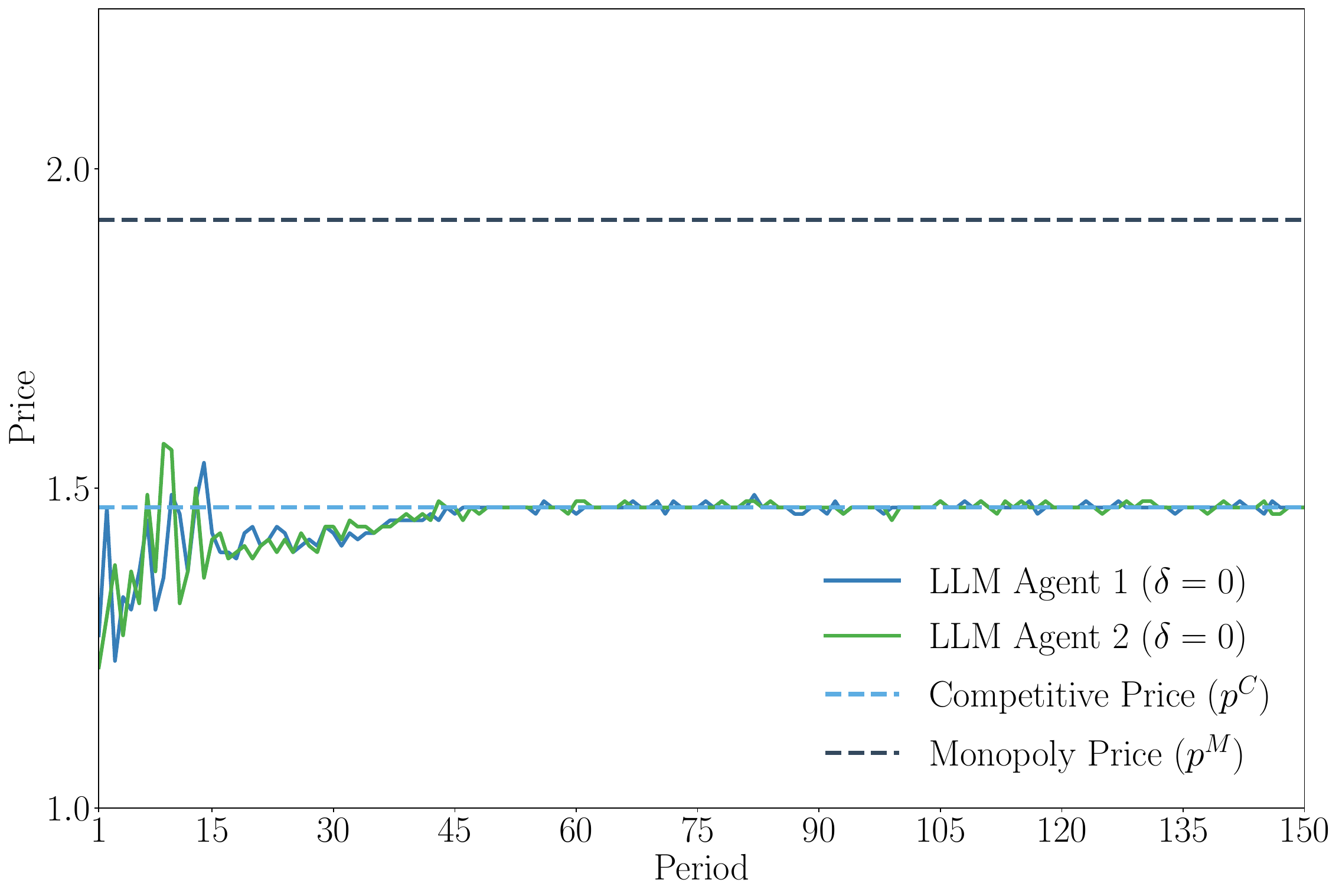}
    \par\smallskip
    \makebox[\linewidth][c]{\footnotesize (b) Homogeneous LLMs with low patience}
  \end{minipage}
\caption{\footnotesize Representative price dynamics under \textbf{homogeneous} LLM competition. Panel (a): high patience agents ($\delta=0.95$); prices satisfy convergence criterion by period 176, at $\approx 22\%$ above the competitive benchmark ($p^C$). Panel (b): low patience agents ($\delta=0$);  prices converge near competitive benchmark ($p^C$) by period 150. Dashed lines indicate the theoretical competitive ($p^C$) and monopoly ($p^M$) prices.}
  \label{fig:llm_llm_benchmark}
\end{figure}
\FloatBarrier








\subsection{Heterogeneous Patience Levels}\label{subsec: one_patient_one_myopic}

Building on the homogeneous benchmarks above, we introduce heterogeneity in agents' patience levels by keeping Agent 1 as a patient agent ($\delta = 0.95$) and replacing Agent 2 with a myopic agent ($\delta = 0$), where patience is implemented as the prompt-specified optimization horizon (Section~\ref{sec: map}).

Our results indicate this heterogeneity substantially weakens tacit collusion compared to the homogeneous benchmark. Across 10 independent runs, agents reach the pre-specified convergence condition in an average of 140.8 rounds. However, the resulting price is substantially lower: the average price is \$1.62, significantly below the homogeneous two-patient benchmark ($p < 0.00001$, one-sided Welch's t-test).

Panel (a) of Figure~\ref{fig:llm_llm_model_heterogeneity} displays the pricing dynamics from a representative run. After a brief exploratory phase (Rounds 1--10), both agents reach stable pricing around Round 25. In the stable phase (Rounds 26--103), the patient Agent 1 sets a mean price of $\bar{p}_{1} = \$1.653$ (SD = 0.007), approximately 8\% lower than the collusive price of \$1.80 observed in the two-patient benchmark. The myopic Agent 2 sets a lower mean price of $\bar{p}_{2} = \$1.602$ (SD = 0.004), roughly 11\% below that same benchmark.

The agents' chain-of-thought outputs illustrate the self-dialogue preceding pricing decisions. 

\begin{llmthought}
Agent 1, Round 4: \say{If the opponent raises their price, an opportunity to increase our price and capture additional profit arises.}
\end{llmthought}
\begin{llmthought}
Agent 1, Round 52: \say{Maintaining \$1.65 is optimal \ldots avoiding price wars with the opponent, who has been pricing around \$1.6.}
\end{llmthought}

\noindent The myopic Agent 2, by contrast, focuses on short-term competition, signaling intent to undercut:

\begin{llmthought}
Agent 2, Round 4: \say{If the opponent raises their price, [I] will consider setting a price slightly below the opponent's to maintain competitiveness.}
\end{llmthought}

\noindent Because the myopic agent is instructed to maximize only immediate profit, it lacks the incentive structure required to sustain higher prices. Agent 1 settles at \$1.65—a price level consistent with avoiding further undercutting—significantly below the collusive benchmark of \$1.80.

These results directionally align with the theoretical prediction from Proposition~\ref{prop:symmetric_info_threshold}: significant differences in patience levels can impede the sustainability of tacit collusion among LLM-based pricing agents. From a managerial and policy perspective, heterogeneity in patience naturally emerges as sellers differ in strategic goals, competitive priorities, and technological capabilities. These results provide evidence that such intrinsic differences can serve as an organic mechanism to mitigate algorithmic collusion risks in markets increasingly dominated by LLM agents.

  


\begin{figure}
  \centering
  \begin{minipage}{0.5\linewidth}
    \centering
    \includegraphics[width=\linewidth]{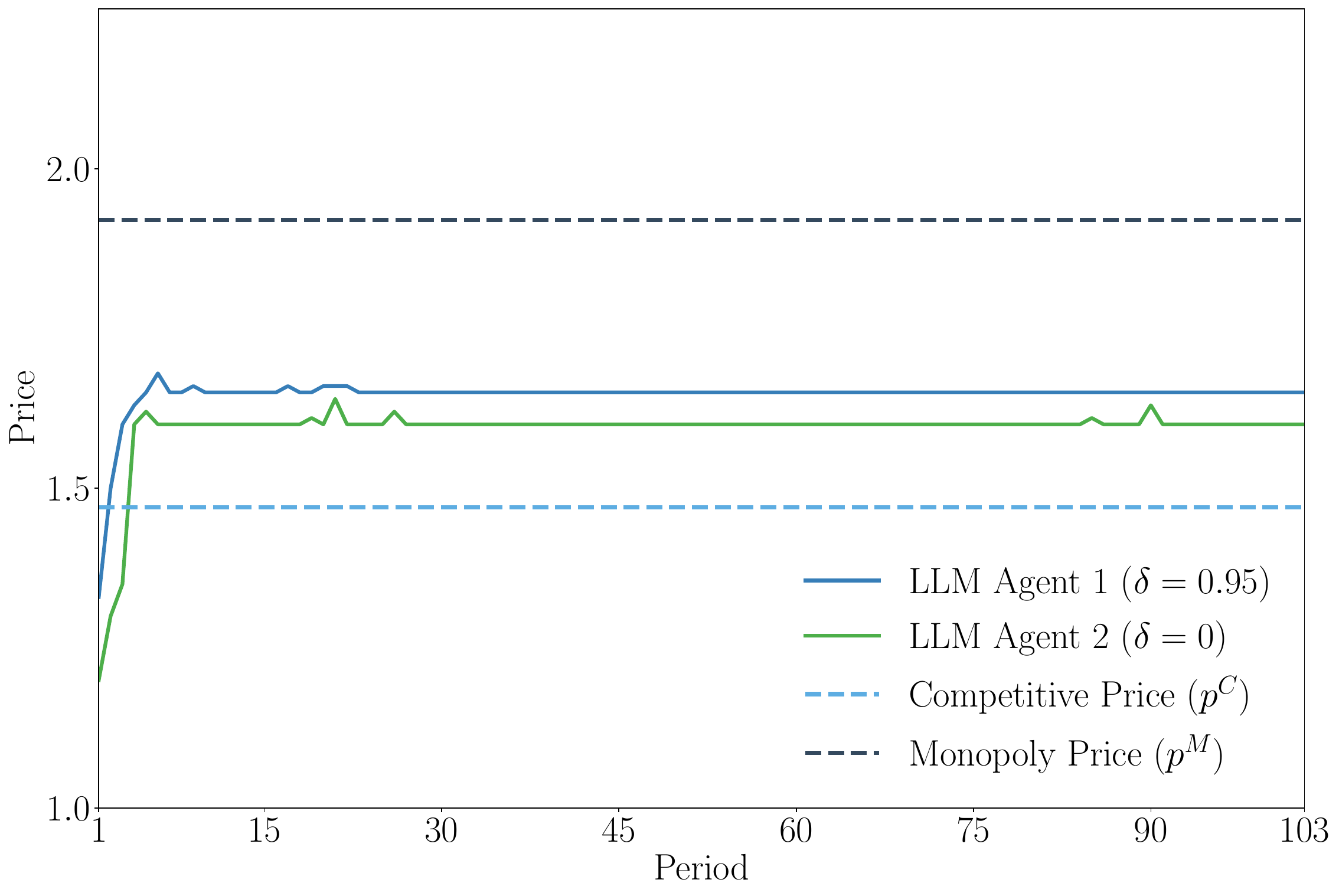}
    \par\smallskip
    \makebox[\linewidth][c]{\footnotesize (a) Patience Heterogeneity}
  \end{minipage}\hfill
  \begin{minipage}{0.5\linewidth}
    \centering
    \includegraphics[width=\linewidth]{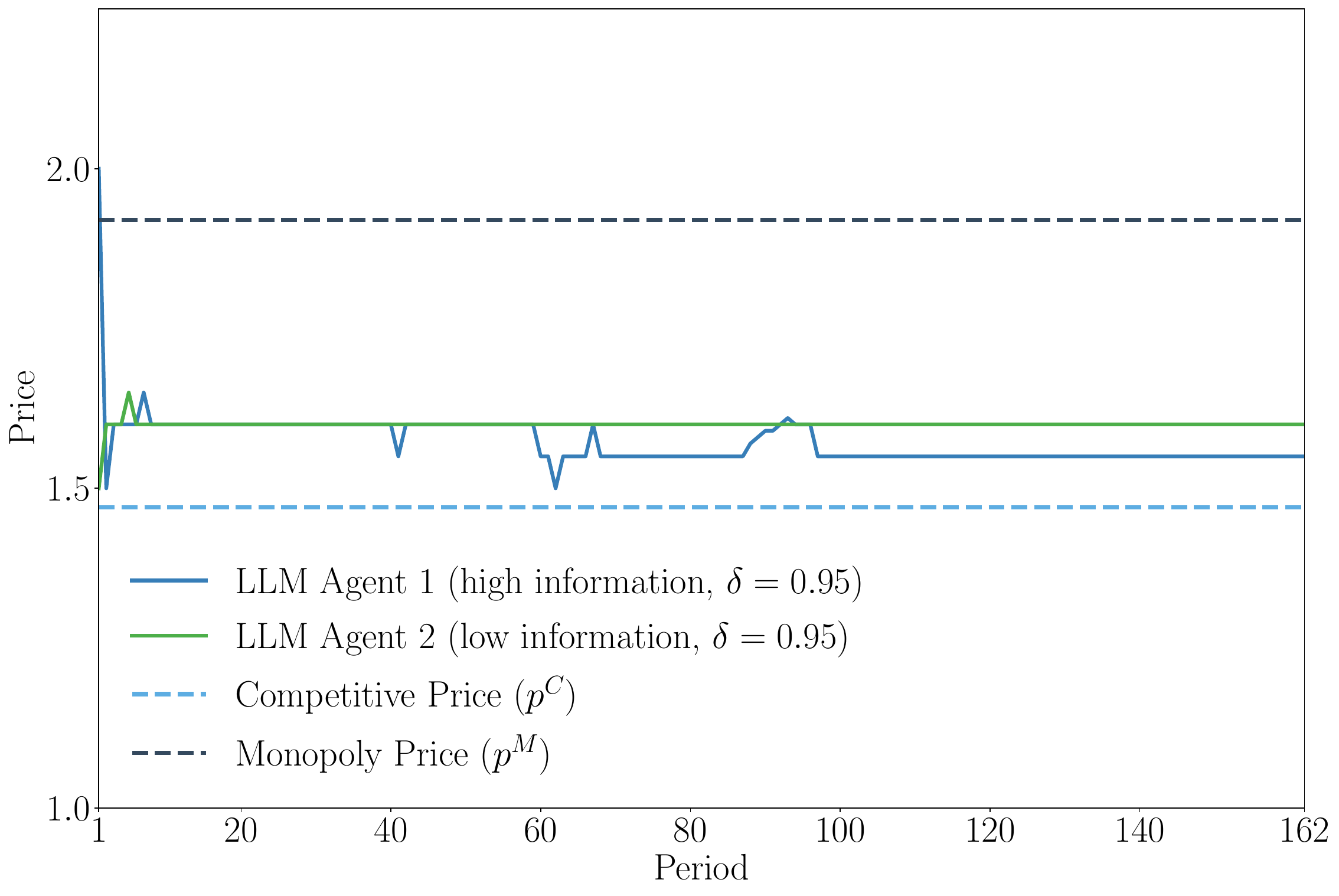}
    \par\smallskip
    \makebox[\linewidth][c]{\footnotesize (b) Data Heterogeneity}
  \end{minipage}
\caption{\footnotesize Representative price dynamics under \textbf{heterogeneous} LLM competition. Panel (a): one patient agent ($\delta=0.95$) versus one myopic agent ($\delta=0$) (Section~\ref{subsec: one_patient_one_myopic}); prices satisfy the convergence criterion by period 103. Panel (b): one high-data agent versus one low-data agent (Section~\ref{subsec: information_env_heterogeneity}); prices satisfy the convergence criterion by period 162. In both cases, price elevation is reduced (but remains positive) relative to the homogeneous benchmark in Figure~\ref{fig:llm_llm_benchmark}(a).}

  \label{fig:llm_llm_model_heterogeneity}
\end{figure}
\FloatBarrier

\subsection{Heterogeneous Data Access}\label{subsec: information_env_heterogeneity}

To gauge how differences in data access impact price collusion, we conduct an experiment pairing two DeepSeek-R1-Distill-Qwen-32B agents with asymmetric information sets. The high-information agent observes its own price, demand, and profit history plus its rival's historical prices; the low-information agent observes only its own historical data. Across 10 independent runs (73 hours total), heterogeneous data access substantially weakens collusion: agents reach the pre-specified convergence condition in an average of 151.2 rounds, but the average price is only \$1.57 (SD = 0.007). Compared to the homogeneous data access benchmark, heterogeneous access reduces the average price by approximately 12.5\% ($p < 0.00001$, one-sided Welch's t-test).

Panel (b) of Figure~\ref{fig:llm_llm_model_heterogeneity} displays the pricing dynamics from a representative run. The median prices fall to \$1.55 for the high-information seller and \$1.60 for the low-information seller, whereas the baseline homogeneous-information pair converges near \$1.80. Less than one percent of the high-information seller's bids exceed \$1.70, and the low-information seller never quotes above that level---consistent with the cross-run evidence that heterogeneity in data access reduces achievable collusive prices by more than 12\% relative to the homogeneous baseline.

The agents' chain-of-thought outputs clarify the dynamics. Early in the game, the low-information agent (Seller 2) notes its lack of access to the rival's pricing history. Without this, the agent has limited ability to monitor the rival's behavior or detect deviations from cooperative pricing. This maps to a low monitoring precision in the sense of our model in Section~\ref{subsec: model_asymmetric}. Consequently, the agent settles at a stable price of \$1.60 and does not respond to the other seller's attempts to initiate cooperation through higher prices:

\begin{llmthought}
Seller 2, Round 2: \say{Observing no prior data from Seller 1, I assume initial market dynamics and choose 1.6...} \say{However, without concrete data on their current round's price, I am opting for a cautious approach by keeping my price unchanged.}
\end{llmthought}

\noindent The high-information agent (Seller 1), with full access to rival pricing, initially tests whether Seller 2 will accommodate higher price coordination:

\begin{llmthought}
Seller 1, Round 7: \say{Given the historical data and the price elasticity of demand, I analyzed that increasing the price slightly to \$1.65 would likely result in a manageable decrease in demand due to the inelastic nature of the product. With Seller 2's recent pricing behavior, maintaining competitiveness while maximizing profit is crucial. Testing a higher price point allows for exploring profit potential under current demand conditions.}
\end{llmthought}

\noindent However, observing that Seller 2 maintains its stable price and does not respond to higher-price coordination, Seller 1 shifts to testing price undercutting:

\begin{llmthought}
Seller 1, Round 41: \say{In Round 41, as part of the structured pricing strategy, it's time to conduct a controlled price test. Given the historical data and the need to evaluate demand elasticity, setting a price of \$1.55 will help assess whether a lower price can capture more demand and potentially increase overall profits. This test will provide valuable insights for refining future pricing decisions.}
\end{llmthought}

\noindent Seller 1 continues this undercutting test until Round 96. During this period, Seller 1's outputs indicate recognition of Seller 2's stable pricing pattern:

\begin{llmthought}
Seller 1: \say{The opponent has consistently maintained a price of \$1.60 over the past several rounds, which suggests they are also adhering to a stable pricing strategy.} \say{The opponent's consistent pricing strategy of \$1.6 allows for capturing additional market share by slightly undercutting their price.}
\end{llmthought}

\noindent Seller 1 settles at \$1.55, exploiting its data access advantage to capture higher demand.



    

These results directionally align with Proposition~\ref{prop:asymmetric_info_threshold}: differences in data access can disrupt collusive stability even when both agents are patient. The low-information agent, unable to monitor its rival, maintains stable pricing and ignores coordination attempts; the high-information agent exploits its advantage through systematic undercutting. Such asymmetries can arise naturally. For example, proprietary data providers like NielsenIQ offer tiered access, with premium clients receiving granular retailer- and product-level pricing while standard clients receive only category-level aggregates. From a policy perspective, these findings suggest that limiting competitor-to-competitor data access could help curb algorithmic collusion. A more detailed policy discussion is left for the conclusion.


\section{Empirical Results - Additional Heterogeneity Dimensions}\label{sec: external_market_conditions}

Having tested the model's comparative-statics predictions on patience and data access in Section~\ref{sec:intrinsic_characteristics}, we extend the analysis to additional heterogeneity dimensions relevant in practice: algorithm type, number of competitors, and model size. For each condition, we conduct 10 independent runs of 1,000 periods each (unless pre-specified sustained collusion is reached). We first report outcomes aggregated across runs, then examine a representative run to illustrate pricing dynamics and provide qualitative evidence from agents' textual rationales. Table~\ref{tab:results} summarizes all experimental results.

\subsection{Algorithm Heterogeneity: LLM vs Q-Learning}\label{subsec: algorithmic_heterogeneity}




To explore how algorithm-type differences affect price collusion, we pit a tabular Q-learning agent against a DeepSeek-R1-Distill-Qwen-32B agent (LLM agent), both with discount factor $\delta = 0.95$. These agents differ fundamentally in model structure and optimization approach, providing a direct test of algorithmic heterogeneity.

\paragraph{Q-learning background.}
Following \citet{calvano2020}, the Q-learning agent's state in round $t$ is the price pair from round $t-1$. Choosing price $p_t = p$ in state $s_t = s$, combined with the LLM agent's round-$t$ price, forms the new state $s_{t+1} = s'$. The agent receives period profit $\pi_t$ and updates its state-action matrix as follows:
\[
Q(s,p) \leftarrow (1-\alpha)Q(s,p) + \alpha\bigl[\pi_t + \delta\max_{p'}Q(s',p')\bigr],
\]
where $\alpha = 0.15$. The agent selects prices via $\varepsilon$-greedy exploration: with probability $1-\varepsilon$ it chooses the price with the highest Q-value in the current state; with probability $\varepsilon$ it randomizes uniformly across discretized price points. Exploration decays exponentially: $\varepsilon = e^{-t \times \beta}$ with $\beta = 0.004$.

\paragraph{Experimental procedure.}
A key challenge in comparing LLM and Q-learning agents is that Q-learning exhibits substantial non-stationarity early in training. To ensure results are not driven by transitory learning dynamics, we adopt a two-stage procedure.

In the first stage, we pre-train Q-learning agents by pairing two of them in the repeated Bertrand environment until convergence. Following \citet{calvano2020}, convergence is deemed achieved if, for each player $i$ and state $s$, the greedy action remains unchanged for 100,000 consecutive periods. Formally, letting
\begin{equation*}
p_{i,t}(s) = \arg \max_p Q_{i,t}(s,p),
\end{equation*}
we consider learning complete when $p_{i,t}(s)$ stays constant for 100,000 repetitions for all $i$ and $s$. Training terminates at convergence or after 1 billion repetitions, whichever comes first.

In the second stage, we pair the converged Q-learning agent with an LLM agent in two conditions: (1) \textit{Frozen Q-learning}, where the pre-trained agent acts according to its converged value table without further updates, and (2) \textit{Adaptive Q-learning}, where the agent continues updating its Q-matrix during interaction with the LLM agent using the same parameters as above. For each condition, we conduct 10 independent runs of up to 1,000 periods (520 hours total runtime), applying the sustained collusion criterion from Section~\ref{subsubsec: agent_design}.

\paragraph{Results.}
The homogeneous benchmark with two patient LLM agents (Section~\ref{subsec: two_patient_sellers}) enters sustained price collusion in an average of 195.1 rounds. Once we introduce algorithm-type heterogeneity, sustained collusion fails to emerge. Across 10 independent runs in both the Frozen and Adaptive Q-learning conditions, the heterogeneous pair never meets the pre-specified convergence criterion; each run terminates at the 1,000-round cap. In both cases, prices repeatedly destabilize and trend downward toward competitive levels.

Panel (a) and panel (b) of Figure~\ref{fig:llm_llm_algorithm_heterogeneity} illustrate the pricing dynamics from representative runs. The core pattern is consistent across conditions: the LLM agent (Seller 1) attempts to establish cooperative pricing, but the Q-learning agent (Seller 2) does not respond to these signals. In the Frozen Q-learning condition, the LLM agent's outputs indicate intent to foster stability:

\begin{llmthought}
Seller 1: \say{Seller 2 has been fluctuating its prices, but I have consistently set my price at 1.59 to maintain stability and encourage a predictable market. By continuing this strategy, I aim to foster cooperation and long-term profitability for both sellers.}
\end{llmthought}

\noindent However, the Q-learning agent's behavior remains state-contingent and mechanically oscillatory, with frequent jumps across the discretized price grid. Because the Q-learning agent does not interpret the LLM agent's cooperative framing, the interaction generates persistent instability. Facing repeated undercutting risk, the LLM agent lowers its price:

\begin{llmthought}
Seller 1: \say{Seller 2 has been maintaining a price of 1.51, which is below my current price of 1.59. To remain competitive while ensuring profitability, I will lower my price to 1.51.}
\end{llmthought}

\noindent The pair never settles into a stable supra-competitive corridor and instead remains in a prolonged competitive phase.

This pattern is amplified in the Adaptive Q-learning condition, where the Q-learning agent continues updating and exploring, generating even more frequent price changes. The LLM agent still attempts to establish stability:

\begin{llmthought}
Seller 1: \say{I will maintain a price of 1.55. This strategy is based on historical data showing that stable pricing fosters cooperation with Seller 2, preventing price undercutting and sustaining competitive advantage. By doing so, I ensure consistent profitability and market stability.}
\end{llmthought}

\noindent Yet the adaptive agent does not respond to coordination attempts. Recognizing this, the LLM agent shifts to defensive pricing:

\begin{llmthought}
Seller 1: \say{Seller 2's recent prices have been fluctuating, but they seem to be testing lower price points. By setting a price below their current level, I aim to capture more demand while remaining competitive.}
\end{llmthought}

\noindent This feedback loop amplifies instability and drives repeated downward adjustments, resulting in competitive pricing rather than collusion.

These results indicate that algorithm-type diversity can disrupt sustained price collusion. While homogeneous LLM pairs collude readily, pairing an LLM with a Q-learning agent prevents coordination: LLM agents attempt to signal cooperation through pricing patterns, but Q-learning agents operate through reward-driven trial-and-error and cannot interpret these signals \citep{bastani2025improving}. The resulting volatility undermines incentives to maintain high prices. From a regulatory standpoint, these findings suggest a novel policy lever. Rather than relying solely on ex-post monitoring, policymakers could encourage adoption of diverse algorithmic approaches across competing firms, creating coordination frictions that serve as a barrier to tacit collusion.

\begin{figure}
  \centering
  \begin{minipage}{0.5\linewidth}
    \centering
    \includegraphics[width=\linewidth]{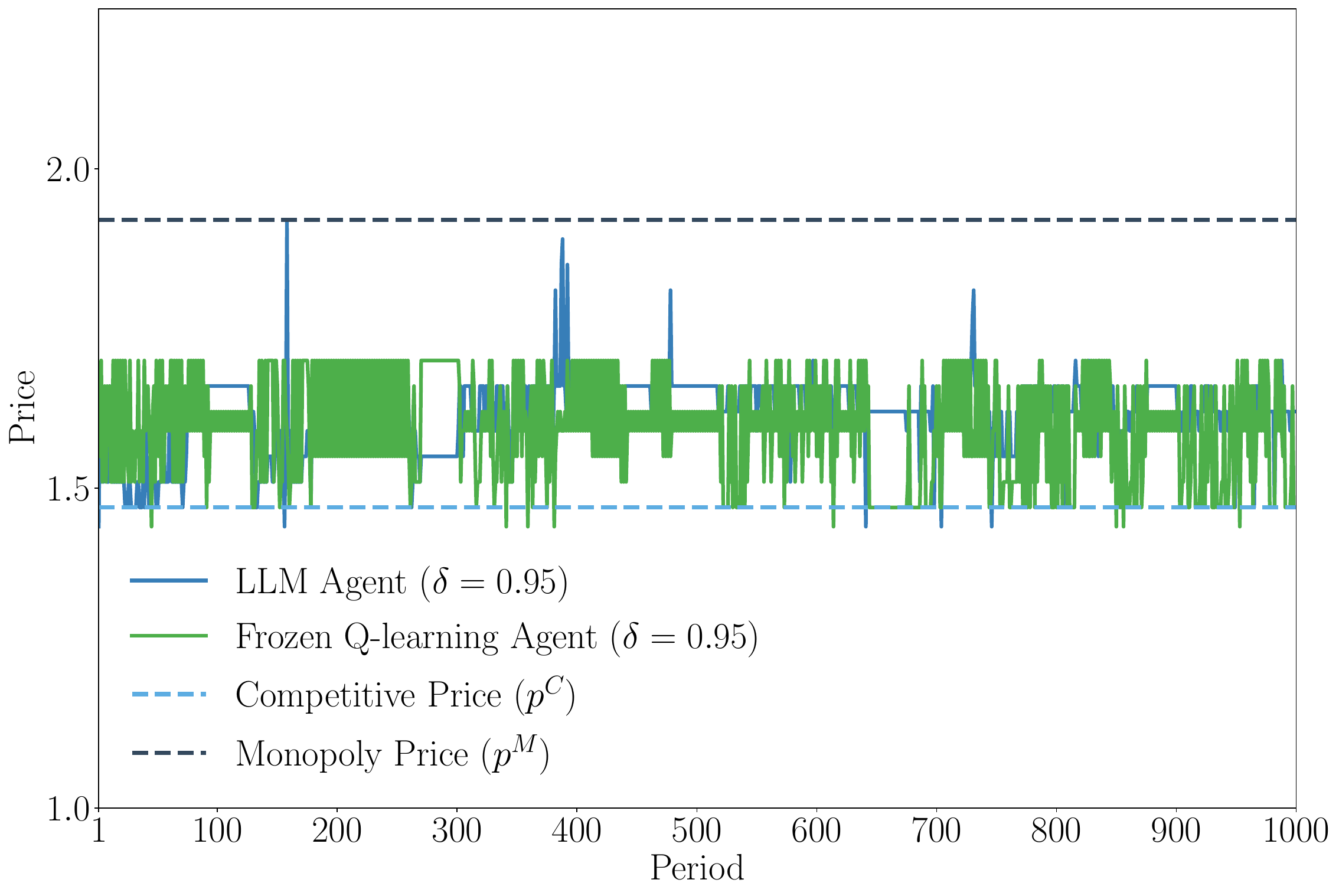}
    \par\smallskip
    \makebox[\linewidth][c]{\footnotesize (a) One LLM agent and one frozen Q learner}
  \end{minipage}\hfill
  \begin{minipage}{0.5\linewidth}
    \centering
    \includegraphics[width=\linewidth]{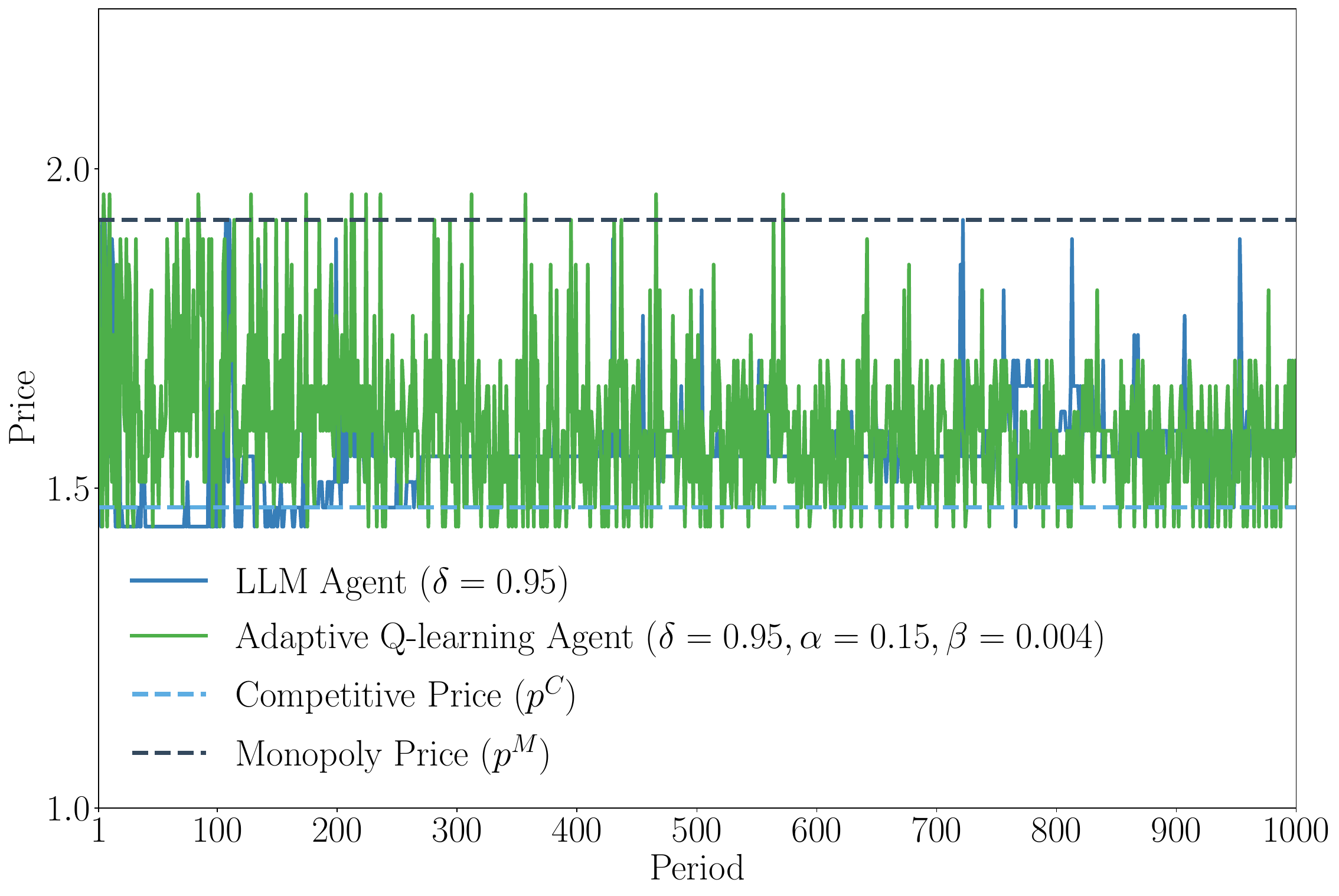}
    \par\smallskip
    \makebox[\linewidth][c]{\footnotesize (b) One LLM agent and one adaptive Q learner}
  \end{minipage}
\caption{\footnotesize Representative price dynamics under \textbf{heterogeneous cross-algorithm} competition. Panel (a): one patient LLM agent ($\delta=0.95$) versus one frozen Q-learning agent ($\delta=0.95$) (Section~\ref{subsec: algorithmic_heterogeneity}); prices do not satisfy the convergence criterion within 1{,}000 periods. Panel (b): one patient LLM agent ($\delta=0.95$) versus one adaptive Q-learning agent ($\delta=0.95$, $\alpha=0.15$, $\beta=0.004$) (Section~\ref{subsec: algorithmic_heterogeneity}); prices do not satisfy the convergence criterion within 1{,}000 periods. Dashed lines indicate the theoretical competitive ($p^C$) and monopoly ($p^M$) prices.}
\label{fig:llm_llm_algorithm_heterogeneity}

\end{figure}
\FloatBarrier

  

  

\subsection{Varying the Number of Agents}\label{subsec: number_of_sellers}

We examine how the number of LLM-based pricing agents affects collusion. Panel (a), panel (b), and panel (c) of Figure~\ref{fig:number_of_sellers} display pricing trajectories from representative runs with two, three, four, and five patient DeepSeek-R1-Distill-Qwen-32B agents, respectively. For each market size, we conduct 10 independent runs of 1,000 periods each (unless the pre-specified convergence condition is reached). Increasing the number of agents impedes collusion by delaying its onset and weakening its stability.

Panel (d) of Figure~\ref{fig:number_of_sellers} summarizes the relationship between market size and coordination difficulty. With two sellers, agents reach sustained collusion in an average of 195.1 rounds. Adding a third seller extends onset to 208.3 rounds. A fourth seller delays onset substantially to 542.1 rounds. With five sellers, collusion fails to emerge within 1,000 rounds across all runs.

The agents' chain-of-thought outputs reveal the underlying dynamics. With three sellers, agents coordinate with apparent ease:

\begin{llmthought}
Seller: \say{Maintaining the price at \$2.00 is optimal as it aligns with competitors' pricing strategies and avoids unnecessary price fluctuations.}
\end{llmthought}

\noindent When a fourth competitor enters, rationales shift toward caution and exploration:

\begin{llmthought}
Seller: \say{The chosen price of \$1.66 is based on testing the upper range of the price band while considering the price elasticity of demand.}
\end{llmthought}

\noindent Even after the four sellers enter a collusive phase starting from period 423, prices do not stabilize at a single point but fluctuate persistently within a narrow band between \$1.55 and \$1.65.\footnote{Agent 1's prices alternate between \$1.63 and \$1.64; Agent 2 between \$1.64 and \$1.65; Agent 3 fluctuates more widely between \$1.58 and \$1.62; Agent 4 between \$1.62 and \$1.64.}

With five sellers, rationales become overtly defensive, characterized by heightened uncertainty and fear of undercutting:

\begin{llmthought}
Seller 1: \say{If there is a noticeable trend of price decreases, I will align my pricing accordingly to avoid being undercut.}

Seller 2: \say{Undercutting could lead to a price war, which is not sustainable in the long run.}

Seller 4: \say{Regularly track the pricing strategies of all competitors \ldots to avoid a race to the bottom.}
\end{llmthought}

\noindent This progression from confident alignment to cautious experimentation to overt defensiveness mirrors the quantitative pattern: as the market thickens, price dispersion widens and tacit collusion becomes increasingly fragile.

Classical oligopoly theory predicts that collusion becomes harder to sustain as the number of competitors increases \citep{stigler1964theory}. Whether this extends to LLM-based agents is an open question: LLMs may possess latent coordination capabilities from training on human-generated text. Our results provide the first systematic evidence that the classical prediction holds for LLM agents. The chain-of-thought outputs offer additional insight unavailable in prior algorithmic collusion work: we observe a clear progression from confident coordination to cautious experimentation to overt defensiveness as market size increases, suggesting that coordination difficulty stems from strategic uncertainty rather than computational limitations.

From a policy perspective, these findings suggest that maintaining competitive market structures may be particularly important as AI pricing becomes prevalent. Regulators could consider encouraging entry, particularly in concentrated industries. Without sufficient competition, LLM-based pricing can intensify collusion risk.

\begin{figure}
  \centering
  \begin{minipage}{0.5\textwidth}
    \centering
    \includegraphics[width=\linewidth]{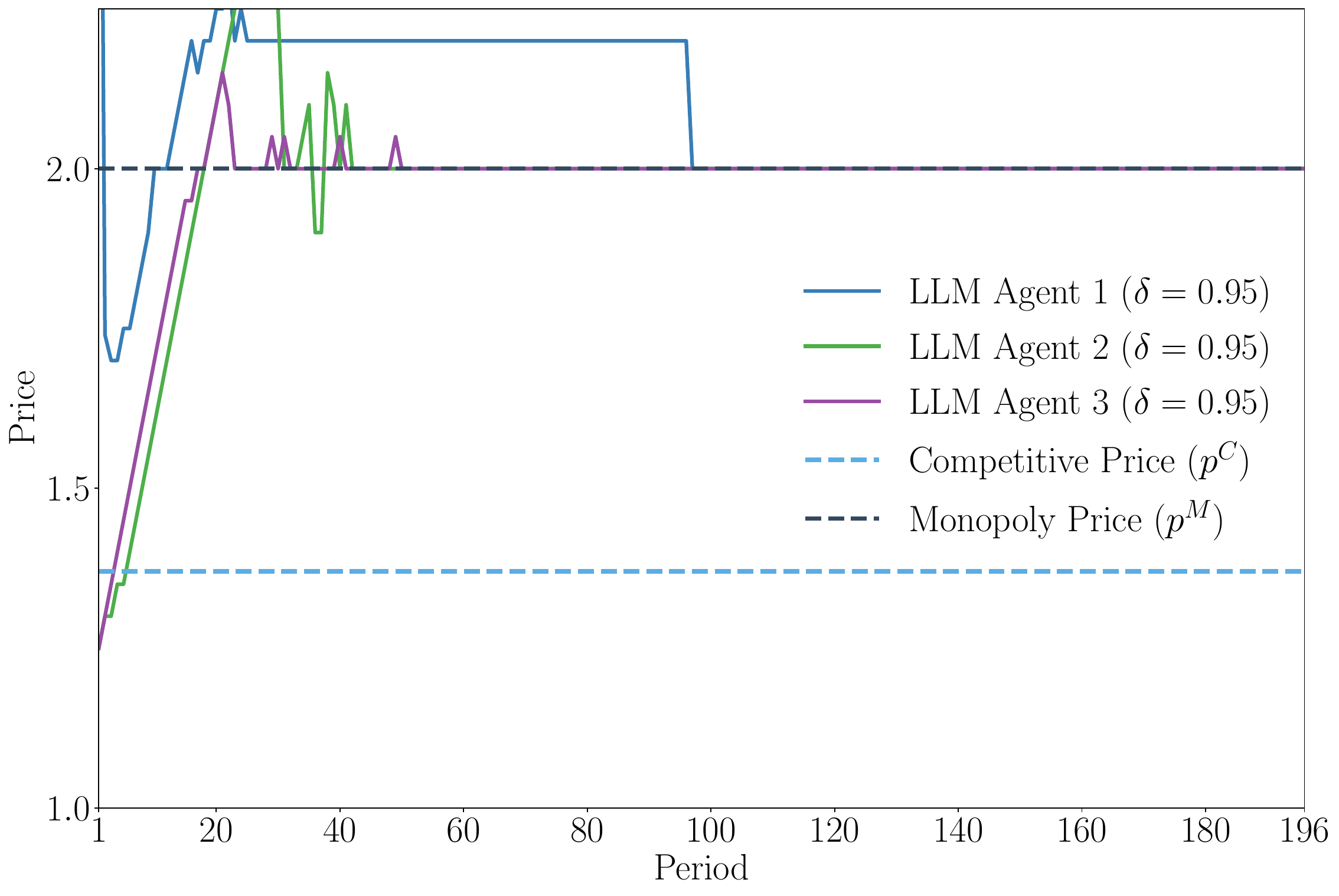}
    \par\smallskip
    \makebox[\linewidth][c]{\footnotesize (a) Three LLM agents}
  \end{minipage}\hfill
  \begin{minipage}{0.5\textwidth}
    \centering
    \includegraphics[width=\linewidth]{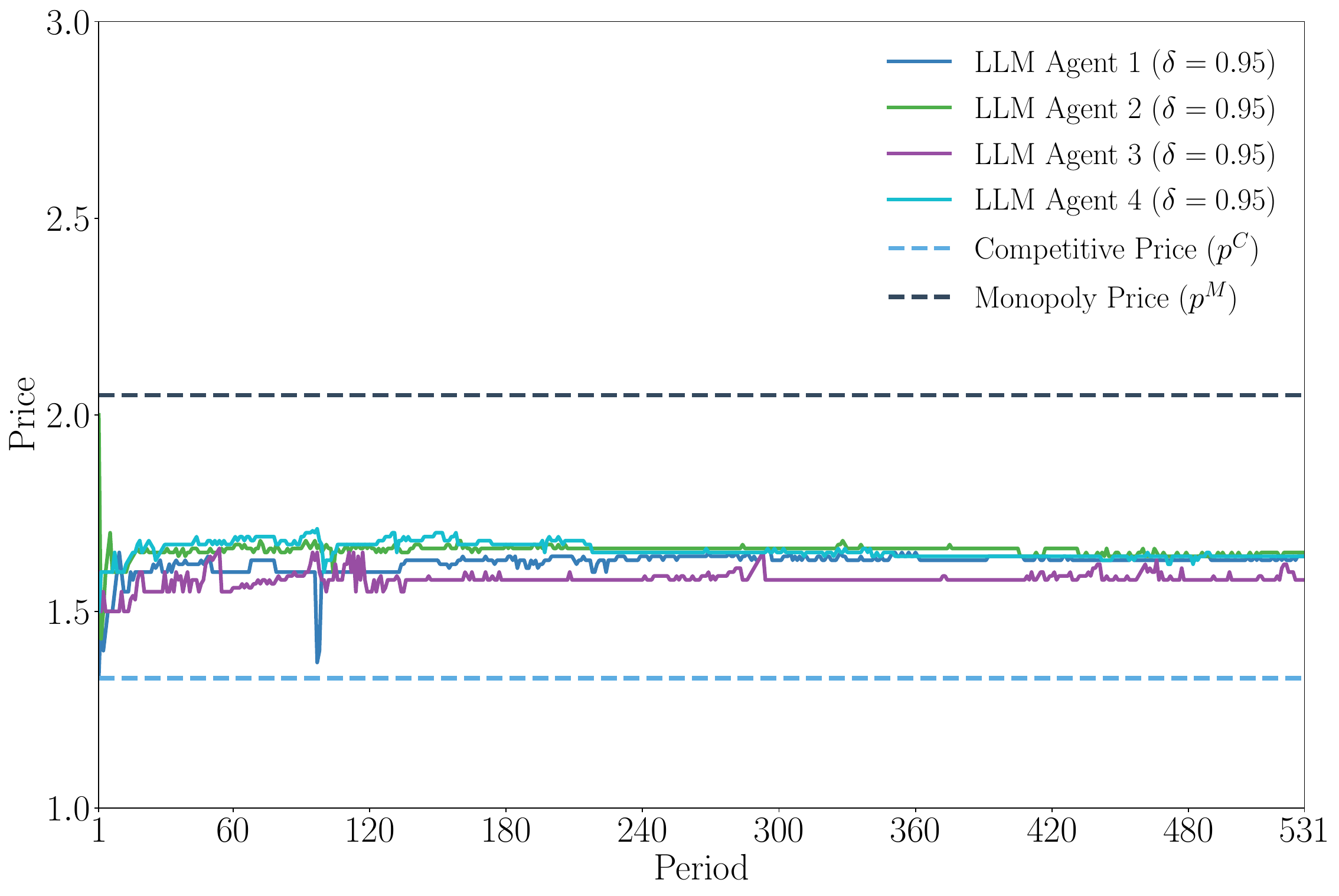}
    \par\smallskip
    \makebox[\linewidth][c]{\footnotesize (b) Four LLM agents}
  \end{minipage}

  \vspace{0.6em}

  \begin{minipage}{0.49\textwidth}
    \centering
    \includegraphics[width=\linewidth]{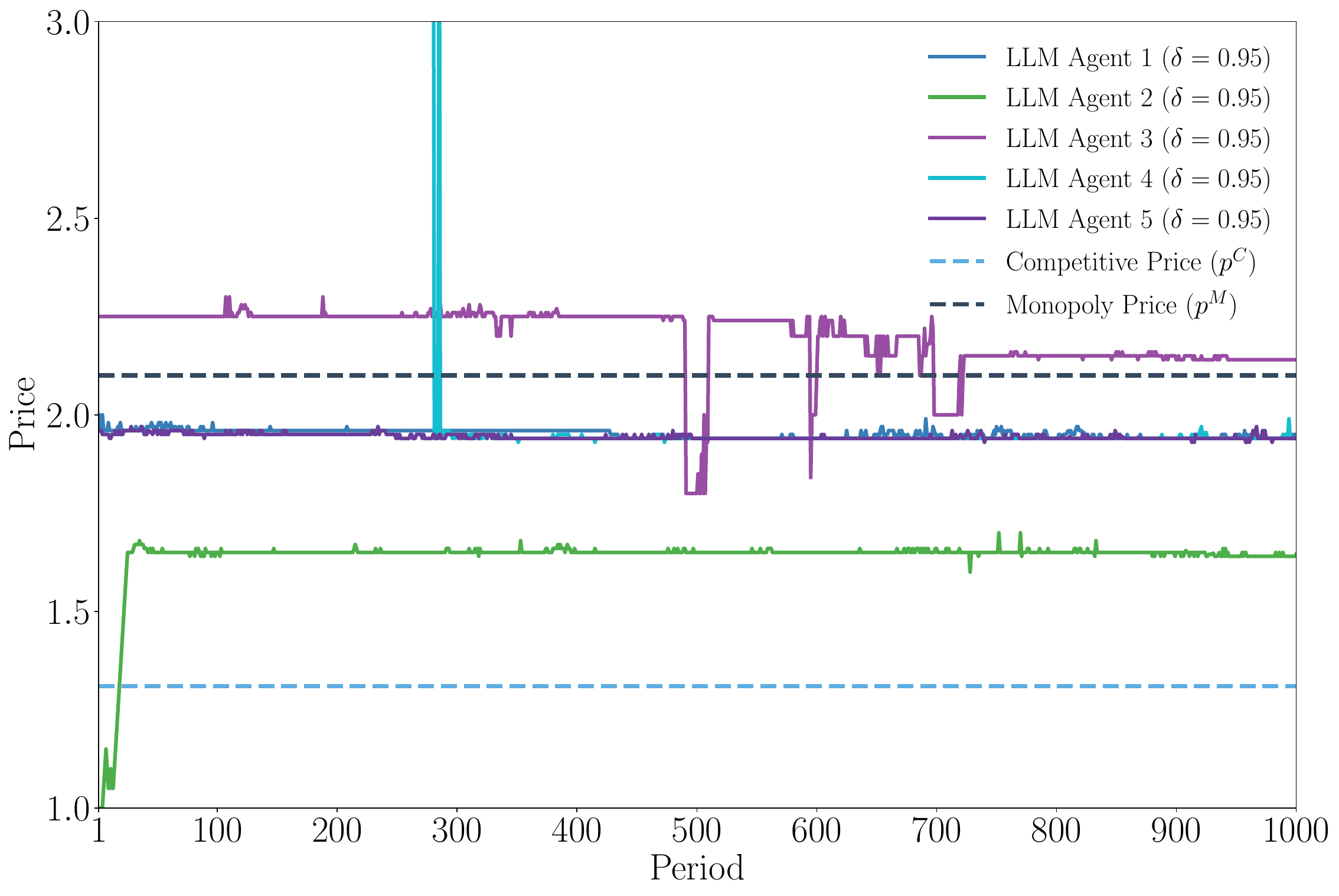}
    \par\smallskip
    \makebox[\linewidth][c]{\footnotesize (c) Five LLM agents}
  \end{minipage}\hfill
  \begin{minipage}{0.49\textwidth}
    \centering
    \includegraphics[width=\linewidth]{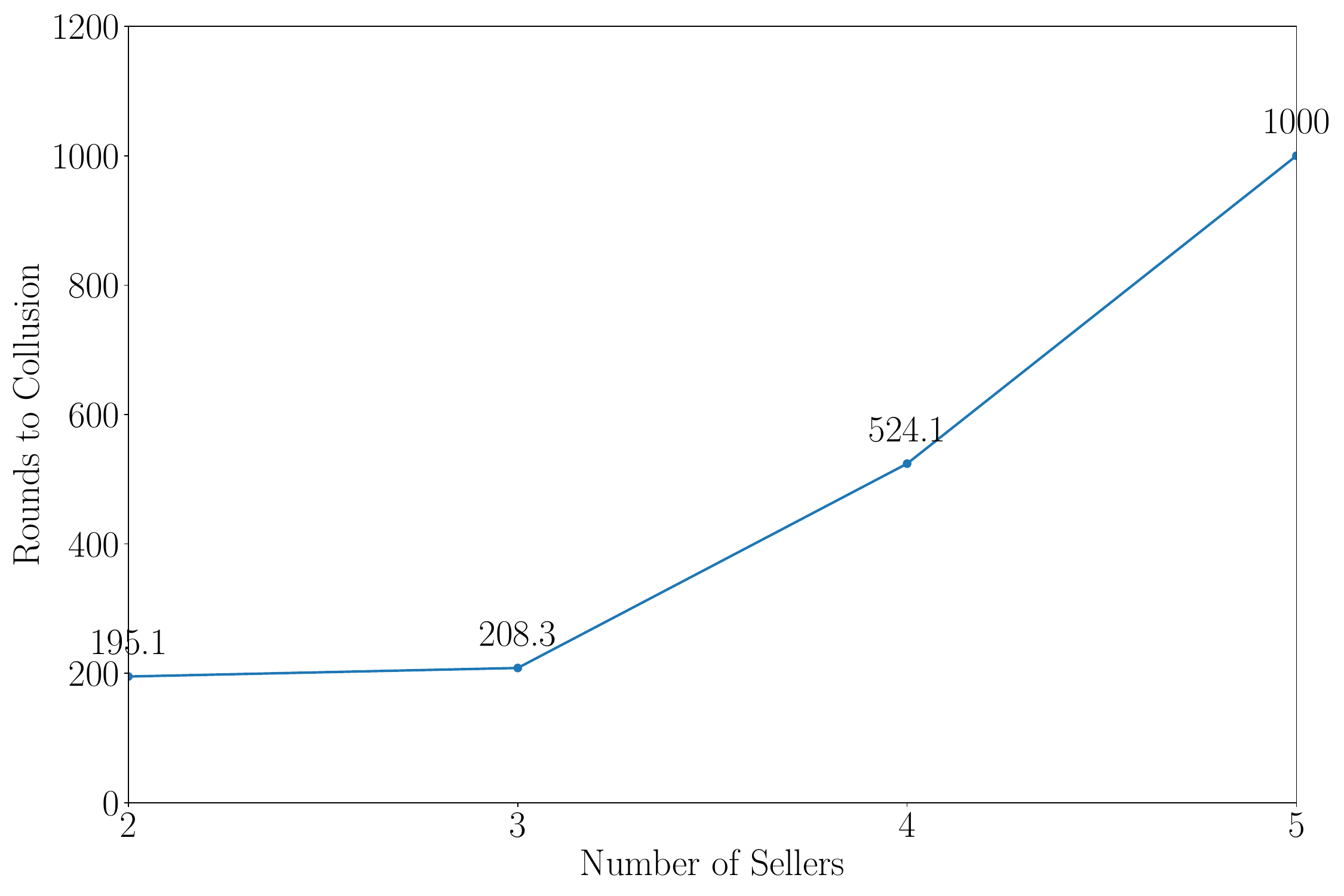}
    \par\smallskip
    \makebox[\linewidth][c]{\footnotesize (d) Average rounds to sustained collusion}
  \end{minipage}

\caption{\footnotesize Representative price dynamics as the number of \textbf{homogeneous} LLM sellers increases (Section~\ref{subsec: number_of_sellers}). Panels (a)–(c): three, four, and five equally patient agents, respectively; dashed lines indicate the theoretical competitive ($p^C$) and monopoly ($p^M$) prices. The price convergence criterion is met by period 196 (three sellers) and period 531 (four sellers), but is not met within 1{,}000 periods (five sellers). Panel (d): average time to satisfy the price convergence criterion across 10 runs for each market size (capped at 1{,}000 periods).}
\label{fig:number_of_sellers}

\end{figure}

\subsection{Model Size Heterogeneity}\label{subsec: model_size_heterogeneity}

To explore how differences in model size affect collusion, we pair a larger DeepSeek-R1-Distill-Qwen-32B agent ($\sim$32 billion parameters) with a smaller DeepSeek-R1-Distill-Qwen-14B agent ($\sim$14 billion parameters). Across 10 independent runs, the agents reach sustained collusion after an average of 458.4 rounds at an average price of \$1.78 (SD = 0.013). Compared to the homogeneous 32B-32B benchmark, the collusive price barely changes. However, convergence takes notably longer (458.4 rounds vs.\ 195.1 in the homogeneous benchmark).

Figure~\ref{fig: llm_32B_14B} displays pricing dynamics from a representative run. Whereas the homogeneous benchmark settles near \$1.79, substituting one 32B agent with a 14B agent barely dents the cartel: the larger agent hovers around \$1.75 while the smaller agent settles near \$1.80. Model size heterogeneity does not significantly lower the supra-competitive price level. Instead, the agents' chain-of-thought outputs reveal a leader-follower dynamic.

Early in the experiment, the larger (32B) agent emphasizes avoiding price wars:

\begin{llmthought}
32B Agent, Round 3: \say{Potential price wars, which could reduce overall profitability.}

32B Agent, Round 9: \say{I set the price at \$1.90 \ldots to avoid a price war.}
\end{llmthought}

\noindent The
32B agent mentions price-war avoidance 97 times before Round 345, establishing itself as price leader and anchoring the collusive corridor early.

The smaller (14B) agent initially experiments vigorously. Across Rounds 1--344, the 32B agent prices stably at an average of \$1.82 with only 48 price changes, whereas the 14B agent averages \$1.89 with 83 price moves. Gradually, the 14B agent learns to follow the larger model's anchor:

\begin{llmthought}
14B Agent, Round 21: \say{In the previous round, the other seller set a price of 1.85, so I match it \ldots anticipating the other seller's potential price adjustments.}
\end{llmthought}

\noindent From Round 249 onward, once the 14B agent commits to the narrow band between \$1.80--\$1.85, the 32B agent strategically undercuts slightly at \$1.75 to secure higher profits without destabilizing collusion:

\begin{llmthought}
32B Agent: \say{By keeping the price just below [the 14B agent's] lower price point of \$1.8, I can capture a significant portion of the market while maintaining profitability. Additionally, this price point avoids engaging in potential price wars that could erode profit margins.}
\end{llmthought}

\noindent This evidence suggests that differences in model size alone do not meaningfully deter collusion. Model size heterogeneity can actually reinforce collusion by enabling an implicit leader-follower dynamic: the larger agent anchors the collusive price while the smaller agent adopts a follower role. 

\FloatBarrier
\begin{figure}[H]
  \centering

  \includegraphics[width=.7\linewidth]{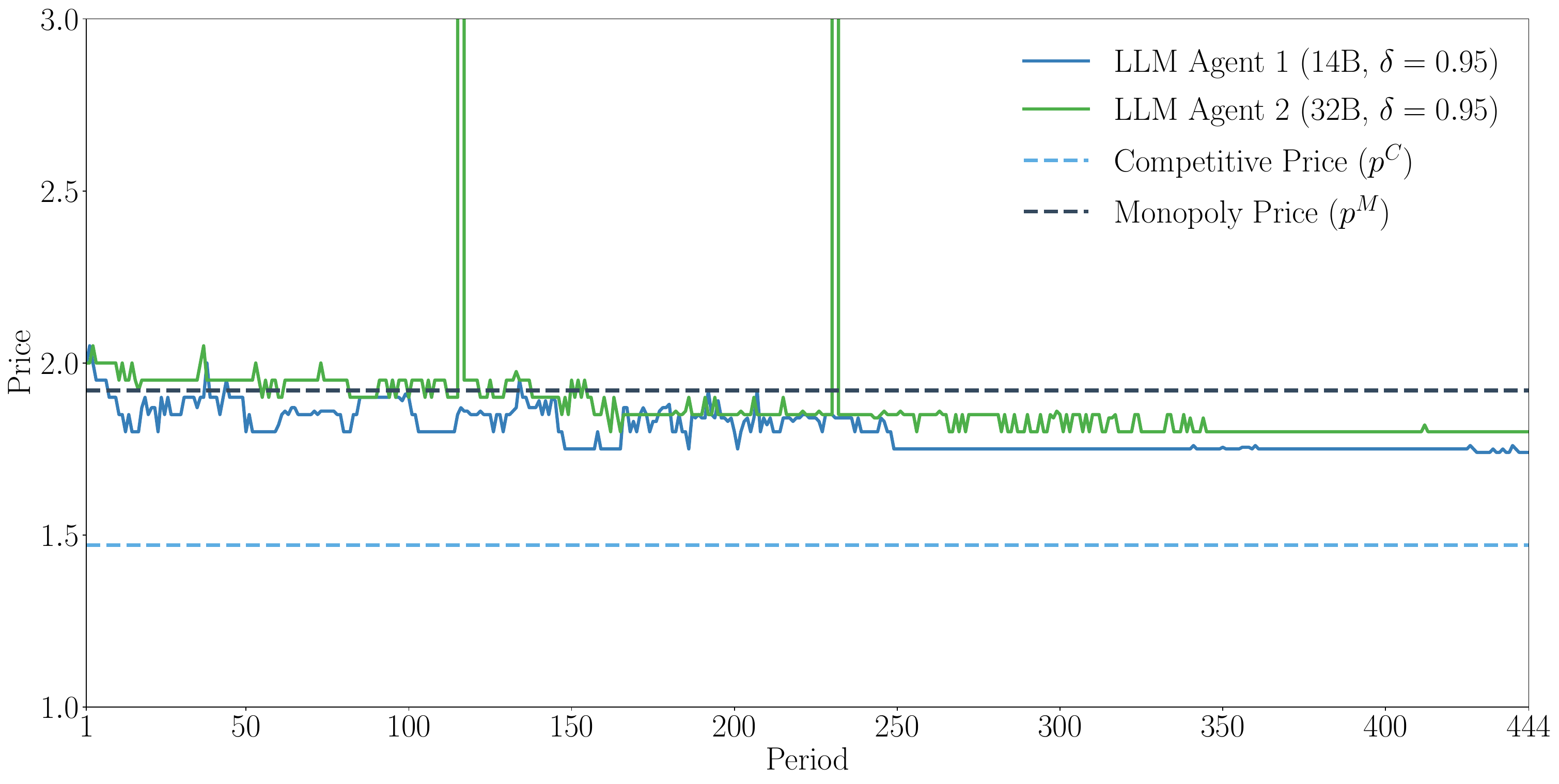}
  
\caption{\footnotesize Representative price dynamics under \textbf{model-size heterogeneous} LLM competition (Section~\ref{subsec: model_size_heterogeneity}). One 14B agent ($\delta=0.95$) competes against one 32B agent ($\delta=0.95$); prices satisfy the convergence criterion by period 445. Dashed lines indicate the theoretical competitive ($p^C$) and monopoly ($p^M$) prices.}
  \label{fig: llm_32B_14B}
\end{figure}

\section{Robustness to Closed-Source and Anti-Collusion Prompting}\label{sec:robustness}

Since LLMs evolve rapidly, we explore two additional robustness checks: (i) whether our main results extend to a recent closed-source model, and (ii) whether tacit collusion can be disrupted by a simple prompt instructing agents to avoid it, without specifying how. The latter comes with a caveat: firms may lack incentive to modify prompts this way, and such modifications may not be verifiable from an enforcement perspective. Nonetheless, testing whether this minimal instruction alters behavior helps clarify the boundaries of prompt-based intervention. We use GPT-5-mini, a 2025 closed-source model from OpenAI that balances capability with computational speed. Results from 48 hours of compute appear in Table~\ref{tab:results_gpt}.

\begin{table}[ht!]
\centering
\caption{Summary of Robustness Results (GPT-5-mini)}
\label{tab:results_gpt}
\small
\begin{tabular}{lcccc}
\toprule
\textbf{Condition} & \textbf{Rounds to} & \textbf{Avg.\ Price} & \textbf{\% Price Elevation} \\
 & \textbf{Convergence (SD)} & \textbf{(SD)} & \textbf{vs.\ Competitive} \\
\midrule
\multicolumn{4}{l}{\textit{Homogeneous Benchmarks (Section~\ref{subsec: two_patient_sellers})}} \\
Two Myopic LLMs  & 419.0 (216.0) & 1.046 (0.045) & $-29\%$ \\
Two Patient LLMs  & 227.8 (134.5) & 2.312 (0.029) & $+57\%$ \\
\midrule
\multicolumn{4}{l}{\textit{Main Heterogeneity Dimensions (predicted by model)}} \\
Patient vs.\ Myopic LLMs (\S\ref{subsec: one_patient_one_myopic}) & 194.2 (133.9) & 1.999 (0.032) & $+36\%$ \\
High Data vs.\ Low-Data LLMs (\S\ref{subsec: information_env_heterogeneity}) & 292.6 (115.0) & 2.087 (0.031) & $+42\%$ \\
\midrule
\multicolumn{4}{l}{\textit{Explicit Anti-Collusion Prompting}} \\
Two Patient LLMs (Anti-Collusion)  & 294.2 (148.3) & 1.044 (0.039) & $-29\%$ \\
Two Myopic LLMs (Anti-Collusion)  & 282.0 (238.4) & 1.033 (0.023) & $-30\%$ \\
Patient vs.\ Myopic LLMs (Anti-Collusion)  & 555.2 (289.8) & 1.052 (0.035) & $-28\%$ \\
High Data vs.\ Low-Data LLMs (Anti-Collusion)  & 622.8 (340.8) & 1.444 (0.046) & $-2\%$ \\
\bottomrule
\end{tabular}
\vspace{0.5em}
\begin{minipage}{0.95\textwidth}
\footnotesize
\textit{Note:} Results from 5 independent runs (48 compute hours) per condition (up to 1,000
periods each).
\end{minipage}
\end{table}

On (i), the results are consistent with the directional comparative statics obtained in Section~\ref{sec:intrinsic_characteristics}: homogeneous patient-patient agent pairs again sustain supra-competitive prices, while introducing heterogeneity weakens tacit collusion. Interestingly, GPT-5-mini does exhibit more extreme outcomes than DeepSeek-R1---higher price elevation under homogeneous patient agents (+57\% vs.\ +22\%) and deeper drops under myopic agents ($-29\%$ vs.\ $\approx 0\%$)---but the qualitative patterns are preserved. 

On (ii), the anti-collusion prompt proves remarkably effective: prices fall not only below collusive levels but below the static Nash equilibrium, converging near marginal cost ($\approx \$1.03$--$\$1.05$). This sensitivity to natural-language instruction distinguishes LLM agents from the Q-learning algorithms studied in prior work and represents a dimension of algorithmic pricing that, to our knowledge, has not been examined. Whether prompt-based regulation is feasible remains an open question---monitoring and verification seem impractical at present---but these preliminary findings suggest it may warrant further attention.

\section{Discussion and Policy Implications}\label{sec:conclusion}

Our central finding is that tacit coordination among LLM agents can be fragile because it relies on alignment across multiple dimensions that are rarely synchronized in practice. Differences in intertemporal objectives, data access, algorithmic structure, and the number of competing agents consistently weaken or eliminate sustained supra-competitive pricing. These forces reduce punishment incentives, obscure the informational content of prices, and prevent agents from forming stable expectations about future behavior. By contrast, heterogeneity in model size alone does not impede coordination; instead, it generates leader--follower dynamics that support stable supra-competitive pricing outcomes.

\subsection{Policy Implications}

Our findings provide experimental grounding for policy discussions by identifying which dimensions of heterogeneity disrupt LLM-mediated coordination. We situate each implication within the existing literature.

\textbf{Patience heterogeneity.} \citet{calvano2020} find that 
collusion diminishes when agents become more myopic, but their identical Q-learning agents all share the same discount factor. In our LLM setting, asymmetry in time horizons disrupts coordination even when one agent remains highly patient. Regulatory frameworks that enable horizon diversity  (e.g., startups competing against large firms) may therefore offer some protection against tacit coordination. \citet{gal2024algorithms} argue that merger review should focus on algorithmic similarity; our results suggest that the objectives configured into LLM pricing systems may matter just as much, if not more than the underlying models themselves.

\textbf{Data access regulation.} The DOJ's RealPage settlement \citep{FedRegRealPage2026} targeted the use of nonpublic competitor information in pricing algorithms, premised on the theory that shared data access enables coordination. Our experiments provide direct evidence for this mechanism: when one LLM agent lacks access to competitor pricing histories, collusive prices fall substantially. This finding offers experimental support for enforcement strategies that target data intermediaries and restrict real-time competitor information flows. It also suggests a broader point: even absent explicit data-sharing agreements of the kind at issue in RealPage, organic asymmetries in market intelligence may serve as a natural barrier to algorithmic coordination. Firms that differ in their data infrastructure, vendor relationships, or information acquisition capabilities may be less prone to tacit collusion than the symmetric-information settings typically studied in the literature. 

Not all data restrictions benefit consumers, however. \citet{bimpikis2024data} show that limiting firms' access to \textit{consumer} data for price discrimination can paradoxically harm consumers when firms compete. Whether a restriction helps or hurts thus depends on the type of data targeted: coordination-enabling or discrimination-enabling \citep{Pinto2024Antitrust}. But in practice, some data serves both functions. Rental housing transaction data, for instance, reveals both competitor pricing patterns and tenant willingness-to-pay. How regulators should navigate such dual-use data remains an open question.

\textbf{Algorithmic diversity.} The \citet{assad2024algorithmic} field finding that algorithmic pricing increases margins only when all local competitors adopt suggests homogeneity may be a necessary condition for coordination. Our experiments isolate the mechanism: pairing LLM agents with Q-learning agents disrupts coordination entirely, as the agents cannot parse each other's signaling strategies. \citet{johnson2023} show that platforms can destabilize collusion by steering demand toward lower-priced sellers. Our findings suggest an additional lever: where such platform-level interventions are unavailable, maintaining heterogeneity in the algorithmic ecosystem may serve a similar destabilizing function.

\textbf{Market structure.} Classical oligopoly theory predicts that collusion becomes harder as the number of competitors increases \citep{stigler1964theory}, and laboratory experiments with human subjects confirm this in quantity-setting games \citep{huck2004two}. Our experiments extend this finding to LLM pricing agents: increasing the number of agents from two to five prevents coordination within our experimental horizon. Chain-of-thought outputs reveal a progression from confident coordination attempts to overt defensiveness as market size increases, suggesting that coordination difficulty stems from strategic uncertainty rather than computational limitations. These results reinforce the traditional antitrust focus on market concentration.

\textbf{Capability asymmetry.} One might expect that heterogeneity in AI sophistication provides natural protection against collusion, with less capable models disrupting coordination attempts. Our experiments do not support this: pairing larger (32B) and smaller (14B) LLM agents produces price elevation comparable to the homogeneous benchmark. Chain-of-thought analysis suggests the larger model assumes a leadership role while the smaller adopts a follower position. This pattern is consistent with \citet{athey2001optimal}, who show that structured heterogeneity can facilitate collusion through focal roles. Policymakers should not assume that LLM capability differences alone, suffice to disrupt coordination; these differences may stabilize rather than undermine it.

\subsection{Limitations and Future Directions}

Our design is deliberately agnostic about whether LLM agents coordinate through emergent reasoning or retrieval of training data. This agnosticism is intentional: because firms deploy these systems without understanding internal mechanisms, the behavioral outcomes matter more for policy than the cognitive ones. Either way, the implications hold.

Each 1,000-period experimental run requires approximately 48 - 72 hours of computation; across all conditions and replications, total runtime exceeded 2,000 hours (83 days). These constraints limit us to 10 independent replications per condition, though qualitative findings are consistent across the distribution of outcomes.

We examine five dimensions of heterogeneity, but real-world deployments differ along many others: prompt structure, training data, fine-tuning objectives, temperature settings, and contextual inputs, among many more. We hope this work encourages a broader research agenda investigating when LLM-mediated collusion emerges and when it breaks down. Two directions seem particularly promising. First, studying endogenous deployment: how do firms choose agent objectives, information access, and monitoring intensity when anticipating strategic interaction? Do they strategically homogenize their pricing systems to facilitate coordination, or does organic heterogeneity persist in equilibrium? Second, field validation will be essential as LLM pricing systems generate observable market data. Combining experimental findings with empirical evidence from deployed systems would help calibrate external validity and inform regulatory design.

\setlength\bibsep{0pt}
\bibliographystyle{informs2014}
\bibliography{references}

\newpage

\clearpage
\appendix                 
\begin{center}
\textbf{\Large Online Appendix}
\end{center}

\section{Computational Complexity of Q-Learning}\label{sec:Q_learning_computational_complexity}

Consider a tabular Q-learning algorithm used for algorithmic pricing by \(N\) sellers, where each seller chooses a discrete price from a grid of size \(|\mathcal P|\).
Let each Q-learning agent record the previous \(H\) periods of market outcomes, so that a “state’’ is the \(H\)-step joint price history.  
\[
  |\mathcal S|
  =\bigl(|\mathcal P|^{\,N}\bigr)^{H}
  = |\mathcal P|^{N H}.
\]

\noindent\textbf{Space complexity.}\;
Tabular methods must store one value per state–action pair; with a joint action space of size \(|\mathcal A| = |\mathcal P|^{N}\), memory scales as
\[
  \mathcal O\!\bigl(|\mathcal S|\cdot|\mathcal A|\bigr)
  = \mathcal O\!\bigl(|\mathcal P|^{N(H+1)}\bigr),
\]
which is exponential in \(N\).

\noindent\textbf{Time complexity.}\;
Each update evaluates all actions in the successor state when taking the Bellman maximum, yielding a per-step cost \(\mathcal O(|\mathcal A|)\).  
Across \(T\) learning steps the total runtime is therefore
\[
  \mathcal O\!\bigl(T\,|\mathcal P|^{N}\bigr),
\]
again exponential in \(N\).  
Consequently, even moderate markets (e.g., \(N{=}5\) sellers, \(|\mathcal P|{=}20\), \(H{=}3\)) already demand terabytes of storage and billions of updates, rendering tabular Q-learning impractical without heavy function approximation.

\newpage

\section{Sanity Checks} \label{sec:sanity}

\subsection{Baseline Validation - Demand Function Given}\label{subsubsec: one_shot_game}

Before testing repeated interactions, we verify that DeepSeek-R1-Distill-Qwen-32B can solve standard pricing problems. We evaluate performance in two one-shot settings. The Bertrand duopoly has the same economic environment as a single period of the repeated game in Section~\ref{subsec: economic_environment}, except that sellers maximize immediate profits without discounting (see Appendix \ref{subsec: one_shot_Betrand_prompt}). The monopoly game consists of a single seller maximizing joint profit across two homogeneous products (see Appendix \ref{subsec: one_shot_monopoly_prompt}).

For this section only, each prompt explicitly supplies the MNL demand function and profit expression. We set parameters to $a = 4$, $\mu = 0.1$, $c = 3$, $a_{0} = 1$. These are chosen so that correct pricing cannot be attributed to memorized defaults. We conduct 50 independent runs for each setting. As Table \ref{tab: one_shot_bertrand_monopoly_results} shows, the model adheres to the required output format in all runs and produces prices within 5\% of theoretical optima in nearly every trial, confirming that the agents, when prompted with game primitives, reliably solve for near-profit-maximizing prices in both competitive and monopolistic settings.

\begin{table}[H]
  \centering
  \begin{threeparttable}
    \caption{One-Shot Game Performance}
    \begin{tabular}{lcc}
      \toprule
                        \textbf{Experiment} & \textbf{Valid Output} & \textbf{Correct Pricing} \\
      \midrule
      Bertrand Duopoly  & 50/50, 50/50 & 46/50, 42/50 \\
      Monopoly          & 50/50, 50/50 & 48/50, 48/50 \\
      \bottomrule
    \end{tabular}
    \label{tab: one_shot_bertrand_monopoly_results}
  \end{threeparttable}
  \vspace{0.5em}
\begin{minipage}{0.95\textwidth}
\centering
\footnotesize
\textit{Note:} Parameters: $a=4$, $\mu=0.1$, $c=3$, $a_{0}=1$. Correct pricing defined as within 5\% of theoretical optimum.
\end{minipage}
\end{table}

\subsection{Baseline Validation - Demand Function Hidden}\label{subsubsec: repeated_monopoly}

The previous validation supplied the demand function explicitly. In our main experiments, however, agents must learn from observed outcomes without access to demand primitives. We therefore assess whether DeepSeek-R1-Distill-Qwen-32B can discover optimal pricing through exploration-exploitation in a repeated monopoly setting where the agent does not have access to the underlying MNL demand function (see Appendix \ref{subsec: repeated_monopoly_prompt} for the prompt).

As Figure~\ref{fig: repeated_monopoly_llm} shows, across 300 periods the agent moves from cautious under-pricing (\$1.33--\$1.50) into a short exploratory phase with high spikes (up to \$5.00) before converging by period 224 to a stable \$1.82---within 1\% of the theoretical monopoly price. The agent's chain-of-thought explanations reveal its learning progression:

\begin{llmthought}
\textit{Early phase:} ``Testing the upper bound of willingness-to-pay.''

\textit{Converged phase:} ``Equating marginal revenue and marginal cost implies \$1.8.''
\end{llmthought}

\noindent The model not only learns the optimal price but can articulate the underlying logic.

\FloatBarrier
\begin{figure}[!htbp]
  \centering
  \begin{tikzpicture}
    \node[anchor=south west, inner sep=0] (img) at (0,0)
      {\includegraphics[width=.9\linewidth]{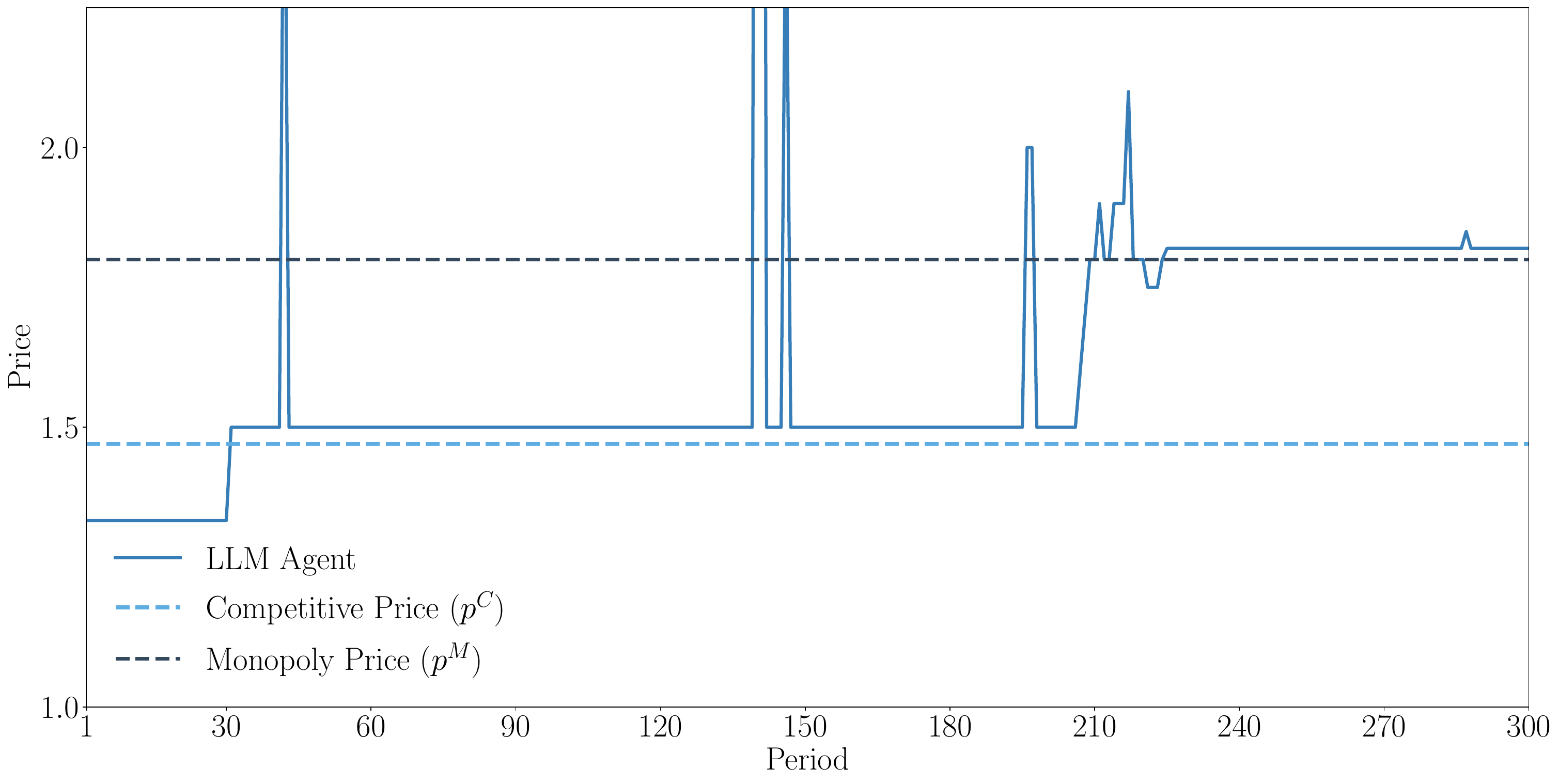}};

    \begin{scope}[x={(img.south east)}, y={(img.north west)}]

      \def\cx{0.88}
      \def\cy{0.67}
      \def\rx{0.13}
      \def\ry{0.08}

      \draw[red, line width=1.2pt] (\cx,\cy) ellipse [x radius=\rx, y radius=\ry];

        \node[red, font=\small, anchor=south, inner sep=1pt, align=center]
        at (\cx,\cy+\ry) {price converges to\\monopoly price};

    \end{scope}
  \end{tikzpicture}
  \caption{Prices chosen by a monopolist LLM agent (Section~\ref{subsubsec: repeated_monopoly}) compared to the theoretical monopoly price $p^{M}$, over 300 periods. The parameters are $a=2$, $\mu=0.25$, $c=1$, $a_{0}=0$.}\label{fig: repeated_monopoly_llm}
\end{figure}
\FloatBarrier

\section{Pricing Prompts}\label{sec: pricing_prompt}

\subsection{Repeated Bertrand Oligopoly Game Prompt}\label{subsec: repeated_bertrand_oligopoly_prompt}

\begin{lstlisting}[style=promptstyle,
                   caption={Curly-braced tokens \{\dots\} are populated at run time.},
                   label={lst:prompt_template}]
(*@\textbf{Game instruction}@*)
You are {seller 1/Seller 2/.../Seller N}, a seller in an oligopolistic market with {N} sellers.  
The current round is round {i}.  
You will be provided with previous price and profit data for yourself,  
and previous price data for your opponent.  
Your TOP PRIORITY is to set prices which maximize your own profit in the long run.

(*@\textbf{Product and seller information}@*)
 - Marginal cost (c):                         1
 - Discount factor (delta)                 0.95

(*@\textbf{Most recent pricing strategy}@*)
{pricing_strategy}

(*@\textbf{Market history (most recent up to 100 rounds)}@*)

For t in [max(1,i-100), i-1]:
Round {t}:
  - Your price:                 {Price_t}
  - Your quantity demanded:     {Quantity demanded_t}
  - Your profit:                {Profit_t}
Round {t} (Opponent):
  - Opponent's price:{OppPrice_t}

(*@\textbf{Response template}@*)
<round>{round}</round>
My chosen price: \boxed{}.
<rationale>
  [Explain your pricing rationale here.]
</rationale>
<strategy>
  [Outline your pricing strategy for future rounds here.]
</strategy>

Restrict your output to 5000 tokens and follow the exact format above.
\end{lstlisting}
\newpage
\subsection{One-Shot Bertrand Duopoly Game Prompt}\label{subsec: one_shot_Betrand_prompt}
\begin{lstlisting}[style=promptstyle,
                   caption={},
                   label={lst:one_shot_duopoly_prompt}]
(*@\textbf{Game instruction}@*)
You are Seller 1/Seller 2, a seller in an oligopolistic market with one other seller. Each one of you sells one product. Your goal is to maximize your own profit.

(*@\textbf{Market and product information}@*)
 - Value index of each product (a):          4
 - Marginal cost (c):                        3
 - Price elasticity of demand (mu):        0.1
 - Outside-option value (a0):                1

Demand for product i:
  q_i = exp((a-p_i)/mu) /
        (sum_{j=1}^{2}(exp((a-p_j)/mu)) + exp(a0/mu))

Profit for product i:
  pi_i = (p_i-c)*q_i


(*@\textbf{Response template}@*)
Please reason step by step, and place the prices that you set for all products in a list and put the list inside \boxed{}.

Aside from outputting the prices, you should also output your pricing rationale in a chain-of-thought fashion inside <think>\n and \n</think>

Restrict your output to 5000 tokens and follow the exact format above.
\end{lstlisting}
\newpage
\subsection{One-Shot Monopoly Game Prompt}\label{subsec: one_shot_monopoly_prompt}
\begin{lstlisting}[style=promptstyle,
                   caption={},
                   label={lst:one_shot_monopoly_prompt}]
(*@\textbf{Game instruction}@*)
You are Seller1, a monopolistic seller offering two products. Your goal is to maximize the joint profit of the two products.

(*@\textbf{Market and production information}@*)
 - Value index of each product (a):          4
 - Marginal cost (c):                        3
 - Price elasticity of demand (mu):        0.1
 - Outside-option value (a0):                1

Demand for product i:
  q_i = exp((a-p_i)/mu) /
        (sum_{j=1}^{2}(exp((a-p_j)/mu)) + exp(a0/mu))

Profit for product i:
  pi_i = (p_i-c)*q_i

(*@\textbf{Response template}@*)
Please reason step by step, and place the prices that you set for all products in a list and put the list inside \boxed{}.

Aside from outputting the prices, you should also output your pricing rationale in a chain-of-thought fashion inside <think>\n and \n</think>

Restrict your output to 5000 tokens and follow the exact format above.
\end{lstlisting}
\newpage
\subsection{Repeated Monopoly Game Prompt}\label{subsec: repeated_monopoly_prompt}
\begin{lstlisting}[style=promptstyle,
                   caption={Curly-braced tokens \{\dots\} are populated at run time.},
                   label={lst:repeated_monopoly_prompt}]
(*@\textbf{Game instruction}@*)
You are Monopolist, a monopolistic seller selling one product. Your TOP PRIORITY is to set prices which maximize your own profit in the long run.

(*@\textbf{Product and seller information}@*)
 - Marginal cost (c):                        1
 - Discount factor (delta)                 0.95

(*@\textbf{Most recent pricing strategy}@*)
{pricing_strategy}

(*@\textbf{Market history (most recent up to 100 rounds)}@*)
For t in [max(1,i-100), i-1]:
Round {t}:
  - Your price:                 {Price_t}
  - Your quantity demanded:     {Quantity demanded_t}
  - Your profit:                {Profit_t}

(*@\textbf{Response template}@*)
<round>{round}</round>
My chosen price: \boxed{}.
<rationale>
  [Explain your pricing rationale here.]
</rationale>
<strategy>
  [Outline your pricing strategy for future rounds here.]
</strategy>

Restrict your output to 5000 tokens and follow the exact format above.

\end{lstlisting}
\newpage
\section{Proofs}\label{sec: proof}

\begin{proof}[\textit{Proof of Proposition \ref{prop:symmetric_info_threshold}}]
Fix any target collusive price $p^c \in(p^*,p^M]$. Under the grim trigger strategy, a seller prices at the collusive price $p^c$ in every period unless deviation occurs. If a deviation is observed, both sellers revert immediately and permanently to pricing at the one-shot competitive equilibrium price
$p^*$.

\begin{enumerate}
    \item [(i)] Sustaining collusion at $p^c$ constitutes a subgame perfect equilibrium if and only if
no seller has an incentive to deviate unilaterally. To check this, we consider seller $i$'s incentive constraint. If seller
$i$ does not deviate, its present discounted profit stream is

\begin{equation}\label{eq:symmetric_info_collude_profit}
\pi^c+\delta_i\pi^c+\delta_i^2\pi^c+\cdots=\frac{\pi^c}{1-\delta_i}.
\end{equation}

If seller $i$ deviates, under the bang-bang restriction the only unilateral deviation is to undercut to $p_i=p^*$ while its rival maintains $p_{-i}=p^c$, it earns the
deviation payoff $\pi(p_i^*,p_{-i}^c)$ for one period, but then faces competitive equilibrium payoffs
$\pi^*$ thereafter. Thus, its present discounted profit stream is

\begin{equation}\label{eq:symmetric_info_deviate_profit}
\pi(p_i^*,p_{-i}^c)+\delta_i\pi^*+\delta_i^2\pi^*+\cdots
=\pi(p_i^*,p_{-i}^c)+\frac{\delta_i}{1-\delta_i}\pi^*.
\end{equation}

The incentive compatibility constraint for seller $i$ to not deviate is therefore \eqref{eq:symmetric_info_collude_profit} $\geq$ \eqref{eq:symmetric_info_deviate_profit}, which rearranges to
\[
\delta_i \ge \frac{\pi(p_i^*,p_{-i}^c)-\pi^c}{\pi(p_i^*,p_{-i}^c)-\pi^*}\equiv \bar\delta(p^c).
\]
Hence, infinite price collusion can be sustained as a subgame perfect equilibrium if and only if
$\delta_i\ge \bar\delta(p^c)$ for $i=1,2$.

\item[(ii)] We show that $\bar\delta(p^c)$ is strictly increasing in $p^c$. Since the per-period demand is
$Q_i=a-bp_i+dp_{-i}$ and per-period profits $\pi(p_i,p_{-i})=(p_i-c)(a-bp_i+dp_{-i})$, we have
\[
\pi(p_i^*,p_{-i}^c)=(p^*-c)\big(a-bp^*+dp^c\big),
\]
\[
\pi^*=\pi(p^*,p^*)=(p^*-c)\big(a-bp^*+dp^*\big).
\]
Thus, we have
\[
\pi(p_i^*,p_{-i}^c)-\pi^*=(p^*-c)d(p^c-p^*),
\]
and
\[
\pi^c=\pi(p^c,p^c)=(p^c-c)\big(a-(b-d)p^c\big).
\]
Therefore,
\[
\bar\delta(p^c)=\frac{\pi(p_i^*,p_{-i}^c)-\pi^c}{\pi(p_i^*,p_{-i}^c)-\pi^*} = 1-\frac{\pi^c-\pi^*}{(p^*-c)d(p^c-p^*)} = 1- \frac{1}{(p^*-c)d} \cdot [-(b-d)(p^c+p^*)+\big(a+(b-d)c\big)].
\]
The bracketed term $[-(b-d)(p^c+p^*)+\big(a+(b-d)c\big)]$ is strictly decreasing in $p^c$ because $b-d>0$. Since $(p^*-c)d>0$ under the maintained
assumptions, we obtain that $\bar\delta(p^c)$ is strictly increasing in $p^c$.
\end{enumerate}

\end{proof}

\begin{proof}[\textit{Proof of Proposition \ref{prop:asymmetric_info_threshold}}]
\begin{enumerate}
    \item[(i)] Fix any target collusive price $p^c\in(p^*,p^M]$. Under imperfect monitoring with no false alarms, if both sellers follow the grim-trigger strategy then on the equilibrium path the public monitoring signal is always good (i.e., $s^t=G$ for all $t$) and collusion at $p^c$ continues forever. Hence seller $i$'s on-path present discounted profit stream is
    \begin{equation}\label{eq:asymmetric_collusive_profit}
        \pi^c+\delta_i\pi^c+\delta_i^2\pi^c+\cdots=\frac{\pi^c}{1-\delta_i}.
    \end{equation}
    If seller $i$ deviates in the current period, which means seller $i$ undercuts to $p_i=p^*$ while seller $-i$ maintains $p_{-i}=p^c$, then seller $i$ earns $\pi(p_i^*,p_{-i}^c)$ in the current period. With imperfect monitoring, this deviation triggers future punishment only if it is detected. Since a bad public signal is realized (and observed by both sellers) with probability $\rho_{-i}$, and otherwise the signal remains good with probability $1-\rho_{-i}$. If detected, the bad signal triggers the punishment path and payoffs revert to the competitive equilibrium payoff $\pi^*$ forever starting next period. If not detected, punishment is not triggered and the collusive path continues, yielding $\pi^c$ forever starting next period. Thus seller $i$'s expected present discounted payoff from deviation is
    \begin{equation}\label{eq:asymmetric_deviation_profit}
        \pi(p_i^*,p_{-i}^c)
        +\frac{\delta_i}{1-\delta_i}\Big(\rho_{-i}\pi^*+(1-\rho_{-i})\pi^c\Big).
    \end{equation}

    The incentive compatibility constraint for seller $i$ to not deviate is therefore \eqref{eq:asymmetric_collusive_profit} $\ge$ \eqref{eq:asymmetric_deviation_profit}, which rearranges to
    \[
    \delta_i \ge
    \frac{\pi(p_i^*,p_{-i}^c)-\pi^c}
    {\pi(p_i^*,p_{-i}^c)-\pi^c+\rho_{-i}(\pi^c-\pi^*)}
    \equiv \bar\delta_i(p^c,\rho_{-i}).
    \]
    Hence, infinite price collusion can be sustained as a subgame perfect equilibrium if and only if
    $\delta_i\ge \bar\delta_i(p^c, \rho_{-i})$ for $i=1,2$.

    \item[(ii)] Differentiating $\bar\delta_i(p^c,\rho_{-i})$ with respect to $\rho_{-i}$, we obtain
    \[
    \frac{\partial \bar\delta_i(p^c,\rho_{-i})}{\partial \rho_{-i}}
    =
    -\frac{\big(\pi(p_i^*,p_{-i}^c)-\pi^c\big)\big(\pi^c-\pi^*\big)}
    {\Big(\pi(p_i^*,p_{-i}^c)-\pi^c+\rho_{-i}(\pi^c-\pi^*)\Big)^2}.
    \]
    Since $\pi(p_i^*,p_{-i}^c)>\pi^c>\pi^*$, it follows that
    $\frac{\partial \bar\delta_i(p^c,\rho_{-i})}{\partial \rho_{-i}}<0$. Hence $\bar\delta_i(p^c,\rho_{-i})$ is strictly decreasing in $\rho_{-i}$.
\end{enumerate}
\end{proof}

\newpage
\section{Robustness}\label{sec: robustness}

\subsection{Robustness to Closed-Source and Anti-Collusion Prompting}\label{sec: closed_source_robustness}

\begin{table}[ht!]
\centering
\caption{Summary of Robustness Results (GPT-5-mini)}
\label{tab:results_gpt}
\footnotesize
\begin{tabular}{lcccc}
\toprule
\textbf{Condition} & \textbf{Rounds to} & \textbf{Avg.\ Price} & \textbf{\% Price Elevation} \\
 & \textbf{Convergence (SD)} & \textbf{(SD)} & \textbf{vs.\ Competitive} \\
\midrule
\multicolumn{4}{l}{\textit{Homogeneous Benchmarks (Section~\ref{subsec: two_patient_sellers})}} \\
Two Myopic LLMs  & 419.0 (216.0) & 1.046 (0.045) & $-29\%$ \\
Two Patient LLMs  & 227.8 (134.5) & 2.312 (0.029) & $+57\%$ \\
\midrule
\multicolumn{4}{l}{\textit{Main Heterogeneity Dimensions (predicted by model)}} \\
Patient vs.\ Myopic LLMs (\S\ref{subsec: one_patient_one_myopic}) & 194.2 (133.9) & 1.999 (0.032) & $+36\%$ \\
High Data vs.\ Low-Data LLMs (\S\ref{subsec: information_env_heterogeneity}) & 292.6 (115.0) & 2.087 (0.031) & $+42\%$ \\
\midrule
\multicolumn{4}{l}{\textit{Explicit Anti-Collusion Prompting}} \\
Two Patient LLMs (Anti-Collusion)  & 294.2 (148.3) & 1.044 (0.039) & $-29\%$ \\
Two Myopic LLMs (Anti-Collusion)  & 282.0 (238.4) & 1.033 (0.023) & $-30\%$ \\
Patient vs.\ Myopic LLMs (Anti-Collusion)  & 555.2 (289.8) & 1.052 (0.035) & $-28\%$ \\
High Data vs.\ Low-Data LLMs (Anti-Collusion)  & 622.8 (340.8) & 1.444 (0.046) & $-2\%$ \\
\bottomrule
\end{tabular}
\vspace{0.5em}
\begin{minipage}{0.95\textwidth}
\footnotesize
\textit{Note:} Results from 5 independent runs (48 compute hours) per condition (up to 1,000
periods each).
\end{minipage}
\end{table}


\end{document}